\documentclass[11pt,letterpaper]{article}

\usepackage[margin=1in]{geometry}
\usepackage{amsmath,amssymb,amsthm,mathtools}
\usepackage{booktabs}
\usepackage{graphicx}
\usepackage{xcolor}
\usepackage{microtype}
\usepackage{url}
\usepackage[hidelinks]{hyperref}
\usepackage[nameinlink,noabbrev]{cleveref}

\newtheorem{theorem}{Theorem}[section]
\newtheorem{lemma}[theorem]{Lemma}
\newtheorem{proposition}[theorem]{Proposition}

\theoremstyle{definition}

\theoremstyle{remark}

\newcommand{\papertitle}{A New Gap Sequence for Shellsort: RL-Driven Algorithm Discovery Beyond $N^{4/3}$}
\newcommand{\bigO}{\mathcal{O}}

\newcommand{\Z}{\mathbb{Z}}
\newcommand{\R}{\mathbb{R}}
\newcommand{\polylog}{\operatorname{polylog}}
\newcommand{\cost}{\operatorname{cost}}
\newcommand{\cond}{\operatorname{cond}}
\newcommand{\goodarrow}[1]{\textcolor{green!45!black}{\(\downarrow\)\,#1}}
\newcommand{\badarrow}[1]{\textcolor{red!70!black}{\(\uparrow\)\,#1}}

\hypersetup{
  pdftitle={A New Gap Sequence for Shellsort: RL-Driven Algorithm Discovery Beyond N to the 4/3},
  pdfauthor={Bo Liu},
  pdfkeywords={Shellsort, gap sequences, reinforcement learning, program search, numerical semigroups}
}

\begin{document}

\hypersetup{pageanchor=false}
\begin{titlepage}
  \centering
  \vspace*{0.18\textheight}
  {\LARGE\bfseries \papertitle\par}
  \vspace{1.5em}
  {\large Bo Liu\par}
  \vspace{0.35em}
  {\normalsize The Chinese University of Hong Kong, Shenzhen\par}
  \vspace{0.25em}
  {\small\href{mailto:226085017@link.cuhk.edu.cn}%
    {\nolinkurl{226085017@link.cuhk.edu.cn}}\par}
  \vspace{1.8em}

  \begin{minipage}{0.86\textwidth}
    \begin{abstract}
      Choosing the gaps of Shellsort is a well-known open problem.  For more than
sixty years, successful gap sequences have come from special human-designed
formulas, numerical searches, or number-theoretic constructions.  Stronger
general bounds are known for dense or mainly theoretical families, but the
worst-case upper bound for a short, sparse, and practically competitive
construction has not advanced beyond the landmark $N^{4/3}$ line for decades.
We ask whether the sequence itself can instead be learned from execution.  We
present an RL-driven, self-supervised system that searches over executable gap
generators.  Every proposal is valid by construction.  Executed candidates
return exact comparison and move counts, with deterministic censoring only
beyond a fixed budget; no classical sequence is used as a target.  Across five
independent searches, the system discovers a common
rational-geometric family.  A second self-supervised stage tunes only a finite
prefix and produces the exact practical sequence
$1,3,8,20,47,116,300,585,1416,3303,\ldots$.  After all choices are frozen, this
sequence obtains the lowest equal-task average operation count among seven
classical baselines on 25 large tasks with $10^7<N\le10^8$.  We then complete
the learned tail without changing its practical behavior: only beyond
$10^{1000}$, a zero-density set of unit companions $h_s+1$ removes the
remaining congruence barriers in the pass analysis.  The resulting sparse
sequence has matching polynomial upper and lower exponents, up to
polylogarithmic factors:
\[
  \Omega\!\left(N^{1.024296451657\ldots}\right)
  \ \le\ T(N)\ \le\
  \bigO\!\left(N^{1.024296451657\ldots}\polylog N\right).
\]
The lower bound follows by applying Zang's recent preprint theorem for
rational-geometric sequences; our contribution is the matching upper bound.
Thus one exact sequence connects self-supervised algorithm discovery,
large-scale practical performance, and a substantial step below the classical
$N^{4/3}$ bound for sparse practical Shellsort sequences.

    \end{abstract}

    \vspace{1em}
    \noindent\textbf{Keywords:}
    Shellsort, increment sequences, reinforcement learning, program search,
    numerical semigroups, Frobenius problem
  \end{minipage}
\end{titlepage}

\setcounter{page}{1}
\hypersetup{pageanchor=true}

\section{Introduction}
\label{sec:introduction}

\subsection{Theory, practice, and the \texorpdfstring{$4/3$}{4/3} barrier}

Shellsort has one open design choice at its center: the gap sequence.  A gap
$h$ splits the array into $h$ arithmetic subsequences and insertion-sorts each
one; a decreasing sequence ending in one completes the sort~\cite{shell1959}.
Earlier passes remove long-range disorder, but their effect depends sharply on
the exact integers chosen.  The gaps can therefore change both practical cost
and worst-case complexity without changing the sorting code.

The main proof lens is arithmetic.  Before a current gap $d$, the larger gaps
have ordered distances that are their nonnegative integer combinations.  The
$d$-pass is controlled by the multiples of $d$ still missing from this
semigroup, while direct insertion sorting gives the separate bound $O(N^2/d)$.
Papernov and Stasevich obtained $O(N^{3/2})$ for a sparse geometric sequence,
and Pratt proved the same behavior for a broad almost-geometric class
~\cite{papernov1965,pratt1972}.  Pratt's denser set of all $2^i3^j<N$ reaches
$O(N\log^2N)$, but needs $\Theta(\log^2N)$ passes and is slow in practice.

Sedgewick broke the sparse $3/2$ exponent with an $O(N^{4/3})$ sequence
~\cite{sedgewick1986}.  Its proof makes every local triple satisfy special
modular and coprimality identities, obtains a three-generator Frobenius bound,
and balances $O(Nd^{1/2})$ against $O(N^2/d)$.  This success also identifies
the barrier: going below $4/3$ requires fewer unrepresentable multiples in
every growing local window, but general Frobenius numbers with four or more
generators have no comparable formula.  The arithmetic is difficult even for
a fixed number of passes.  Yao's three-pass average-case analysis required a
delicate exact expression, and Janson and Knuth later sharpened it to an
$O(N^{23/15})$ construction~\cite{yao1980,jansonknuth1997}.

The broader theory does pass below $4/3$, but not with the simple sparse rules
normally used in implementations.  Incerpi and Sedgewick interleave structured
product tables and obtain stronger asymptotic families, with many passes or
large parameter-dependent constants~\cite{incerpi1985}.  Complementary lower
bounds show that this tradeoff cannot simply disappear.  Plaxton, Poonen, and
Suel and, independently, Poonen bound the worst case as a function of the
number of passes.  Jiang, Li, and Vit\'anyi prove the general average-case lower
bound $\Omega(pN^{1+1/p})$ for $p$ passes
~\cite{plaxtonpoonen1992,poonen1993,jiang2000}.  Most recently, a 2026 preprint
by Zang established a worst-case lower bound for every sequence that stays
within a fixed distance of a rational geometric progression~\cite{zang2026}.
Applied to our ratio,
its exponent is exactly $1.024296\ldots$, matching the polynomial part of our
upper bound.  In practice, simple rules such
as Tokuda's ratio-$9/4$ sequence and Ciura's tuned prefix remain stronger than
the asymptotic constructions at realistic sizes, but have no matching
worst-case theorem~\cite{tokuda1992,ciura2001}.  The theoretical and empirical
literatures have thus optimized different kinds of sequence.

The missing object is a concise near-geometric sequence that is both fast and
provably below $4/3$.  Directly rounding a real geometric rule creates changing
floor errors and congruence classes; testing long prefixes cannot certify its
infinite tail.  We separate these tasks.  Learning first finds the executable
rational-geometric rule.  The proof then clears denominators over a slowly
growing future window, represents all coefficient moves by a compound-semigroup
carry lattice, and adds a zero-density set of unit companions $h_s+1$ to remove
the remaining congruence obstruction.  The first companion lies beyond
$10^{1000}$, so the completion changes no reported run.  This separation lets
finite performance guide discovery without forcing the search into a classical
number-theoretic template.

\subsection{From learned decisions to learned infinite algorithms}

Modern learning systems are useful when a search space is too large to explore
uniformly but candidate quality can still be measured.  A neural network does
not need to replace the search or the checker.  It can learn which parts of the
space look promising, spend more computation there, observe the result, and
improve its next proposal.  This division is important for algorithms: the
model may be uncertain, while execution and mathematics remain exact.

AlphaGo made this pattern widely visible.  A policy network suggested promising
moves, a value network estimated unfinished positions, and tree search combined
those estimates with simulated play; reinforcement learning from self-play
improved both guides~\cite{silver2016alphago}.  The network reduced an enormous
branching problem to a search that could be afforded.  AlphaFold addressed a
different scientific problem rather than program search, but it gave a related
lesson: a learned model can combine large data with geometric and physical
structure to predict objects that were difficult to derive by hand, while a
blind external benchmark checks whether the predictions generalize
~\cite{jumper2021alphafold}.  Neither example means that learning proves its own
output; both show how learned guidance can work with strong structure and an
independent test.

Recent systems have moved from decisions and scientific structures to explicit
algorithms.  AlphaTensor treats tensor decomposition as a game: reinforcement
learning proposes multiplication schemes and exact algebra verifies them
~\cite{fawzi2022alphatensor}.  AlphaDev searches low-level instruction
sequences, tests their correctness, and measures their cost; some discovered
sorting routines entered LLVM libc++~\cite{mankowitz2023alphadev}.  FunSearch
uses a language model to write program fragments, executes them with a
problem-specific evaluator, and feeds the best programs back into the proposal
pool~\cite{romeraparedes2024funsearch}.  In plain terms, these systems replace
blind enumeration with a learned proposal distribution, not with a learned
definition of correctness.

Our setting follows this verified-discovery view but adds three complications.
First, the feedback is self-supervised: exact comparisons, moves, and per-gap
traces come from running Shellsort, not from a dataset of desired sequences.
Second, the object is not a fixed move, tensor, or short routine.  A candidate
must work at every input size and must therefore be a generator

\[
  G:N\longmapsto H_G(N)
\]

that emits a legal gap set.  Third, the final output needs both empirical
performance and an asymptotic proof.  Searching only finite tables would leave
the formula unknown; restricting the search to several familiar formulas would
build the answer into the system; using floating-point behavior alone would
leave no exact object to analyze.

We therefore search complete, bounded generator programs that are valid by
construction.  A learned policy proposes semantic program edits; execution
returns exact costs and traces; and a population retains programs that are
fast, short, or behaviorally different.  No classical gap sequence is a
training target.  The policy changes where we search, but it never decides
validity or cost.  After the generator is frozen on independent data, a second
self-supervised search tunes only its finite prefix.  The rational tail remains
fixed and becomes the exact object completed and proved in \cref{sec:theory}.

\subsection{Contributions}

We make three contributions.

\begin{enumerate}
  \item \textbf{A new discovery system.}  It searches valid gap-generator
  programs with eight program edits and a small Transformer.  It learns from
  exact comparisons and moves, without copying a classical sequence.  Five
  runs each test more than 7,000 different programs and independently find
  compact geometric generators.

  \item \textbf{A stronger practical sequence.}  Its finite prefix is
  \[
    1,3,8,20,47,116,300,585
  \]
  and its exact rational-geometric tail starts
  \[
    1416,3303,7703,17963,41887,97670,\ldots.
  \]
  On 25 test tasks between $10^7$ and $10^8$, it has the best average operation
  count among all tested classical sequences.

  \item \textbf{A tight exponent below $4/3$.}  Beyond $10^{1000}$, we add $h_s+1$ next
  to a zero-density subset of learned gaps.  This does not change any reported
  run.  We prove that the completed sequence uses only $o(\log N)$ added gaps
  below $N$ and, together with the applicable rational-geometric lower bound,
  has worst-case cost
  \[
    \Omega(N^\beta)\le T(N)\le
    \bigO(N^\beta\polylog N),
    \qquad \beta=1.024296451657\ldots.
  \]
\end{enumerate}

\section{RL-driven Generator Discovery}
\label{sec:method}

\subsection{Search problem and program space}

For an array of length $N>1$, a legal gap list is

\[
  H(N)=(g_1,\ldots,g_p),
  \qquad N>g_1>\cdots>g_p=1.
\]

An $h$-pass insertion-sorts the positions in each residue class modulo $h$.
We count every key comparison and every array write.  For a task $T$, which
fixes the length, input distribution, and random seed, the exact cost of a
generator $G$ is

\[
  C(G,T)=C_{\rm cmp}(G,T)+C_{\rm mov}(G,T).
\]

All candidates in one run use the same task panel $\mathcal T$.  We minimize

\begin{equation}
  J(G)=\frac1{|\mathcal T|}\sum_{T\in\mathcal T}
  \log\!\left(
    \frac{C(G,T)+1}{N_T\log_2(N_T+1)}
  \right).
  \label{eq:search-objective}
\end{equation}

The scale term makes different values of $N$ comparable.  It does not use a
baseline sequence.  The five input types are random, reversed, nearly sorted,
duplicate-heavy, and permuted ordered blocks.

The search acts on complete programs, not fixed gap lists.  A program has up
to four real registers, 32 statements, and a bounded loop.  It can assign
values, test conditions, and emit rounded expressions built from protected
arithmetic, logarithms, powers, roots, minima, maxima, and modulus.  Its loop
budget is at most

\[
  F(N)=\min\{512,\ b+c\lceil\log_2(N+1)\rceil\}.
\]

Every execution starts with the gap 1.  Emitted values are filtered and sorted:

\begin{equation}
  H_G(N)=\operatorname{sort}_{\downarrow}
  \bigl(\{h:1\le h<N\}\cup\{1\}\bigr).
  \label{eq:generator-semantics}
\end{equation}

Thus every program stops and returns a legal sequence.  The language can
express formulas, recurrences, branches, unions, and multi-register state.  We
edit programs with eight valid actions: replace an expression, insert an
emission, insert an assignment, delete or duplicate a statement, toggle a
condition, change a constant, or change the loop budget.  Each action maps one
complete program to another, so search time is not wasted on broken syntax.

\subsection{Learning and search}

The model reads both the program and its measured behavior.  A three-layer
Transformer encodes program tokens.  A two-layer Transformer encodes the
generated gaps and their per-pass comparisons and moves.  Both use width 128,
four attention heads, feed-forward width 512, and dropout 0.1.  The model has
1,099,660 parameters and predicts:

\begin{itemize}
  \item which of the eight edits to try;
  \item the normalized comparison and move costs; and
  \item the change caused by deleting each gap.
\end{itemize}

Deleting one gap on a small task gives an exact contribution target.  All
other targets come from normal executions.  With Huber cost and contribution
losses, pairwise ranking loss, and edit cross entropy, we train with

\begin{equation}
  \mathcal L=\ell_{\rm cost}+0.5\ell_{\rm rank}
  +0.3\ell_{\rm edit}+0.25\ell_{\rm contribution}.
  \label{eq:training-loss}
\end{equation}

The model guides proposals but never decides their scores.  Every surviving
child is run and measured exactly.  Improving parent--child edits receive
extra replay weight.  We take 64 gradient steps per round with batch size 64.
In RL terms, the state is a program and its trace, the action is a program
edit, and the reward is the exact cost improvement.  We do not use a policy
gradient estimator.

Each round evaluates new programs, keeps a Pareto archive over comparisons,
moves, pass count, and program size, and mutates elite parents.  The score also
rewards short and behaviorally different programs.  We sample 35\% of edits
uniformly, so every edit remains reachable even if the learned policy becomes
too narrow.  Programs that emit the same gaps on all probe sizes are merged.
Exact costs are stored in SQLite, and checkpoints save the model, optimizer,
population, archive, replay data, and random state.

We run five seeds for 120 rounds.  Each population has 48 programs, 12 elite
parents, and six children per parent.  The common Search panel has 16 sizes
from 32 to $10^5$, five input types, and two task seeds: 160 tasks per
candidate.  The replay buffer holds 200,000 samples.  Runs above
$50N\log_2(N+1)$ operations are stopped and cannot be selected.

An independent Validation panel freezes one candidate per seed; the later
Extrapolation panel is used only for reporting.  The best frozen program is an
exact rational-geometric generator.  We then tune only its first eight gaps.
The learned tail from 1416 onward stays fixed.  This second search changes one
prefix value at a time and uses the same baseline-free objective.  Its Search,
Validation, and one-time Test use disjoint seeds 300, 400, and 1999 and the
disjoint ranges $10^5$--$3\cdot10^6$, $3\cdot10^6$--$10^7$, and
$10^7$--$10^8$.  Since only finitely many gaps change, this tuning also keeps
the later proof unchanged.

\section{Empirical Results}
\label{sec:results}

\subsection{From program search to the final sequence}

Table~\ref{tab:discovery} summarizes the learning and exploration process.
All five searches finish 120 rounds and together evaluate 40,647 distinct
program behaviors.  Every reward comes from executing the proposed program;
the named classical sequences are not labels and enter only after a generator
has been frozen.  The continued objective improvement, late discovery of some
run winners, and independent-panel transfer therefore measure algorithm
discovery rather than imitation.

\begin{table*}[ht]
  \centering
  \caption{Self-supervised discovery results.  Search and tuning use measured
  execution feedback, with deterministic censoring above the stated budget;
  Validation and Extrapolation are independent panels.}
  \label{tab:discovery}
  \begin{tabular}{lll}
    \toprule
    Phase & Metric & Result\\
    \midrule
    Search & independent runs & $5/5$ completed $120$ rounds\\
    Search & distinct executed behaviors & $40{,}647$\\
    Learning & replay-buffer capacity reached & $200{,}000$ per run\\
    Learning & best-objective improvement & $3.09$--$8.58\%$ per run\\
    Exploration & first round of run winner & $13$--$119$\\
    Discovery & common learned family & $5/5$ rounded-geometric\\
    Generalization & beats Sedgewick/Tokuda/Ciura & $4/5$ Val.; $3/5$ Extrap.\\
    Fine-tuning & exact prefix evaluations & $311$ ($9{,}330$ task runs)\\
    Fine-tuning & independent Validation gain & $0.304\%$ total cost\\
    \bottomrule
  \end{tabular}
\end{table*}

Each run executes between 7,224 and 8,455 programs.  Every replay buffer
reaches its 200,000-sample capacity, no invalid program reaches evaluation,
and the final Pareto archives contain 218--318 exported programs.

The system also gives repeatable results.  Four of five frozen programs beat
Sedgewick, Tokuda, and Ciura together on Validation; three of five do so on the
larger Extrapolation panel.  Although the program language allows branches,
unions, and several registers, all five winners reduce to rounded geometric
rules, with ratios

\[
  2.3522,\quad2.3164,\quad2.3619,\quad2.2490,\quad2.3318.
\]

This common form is found by search, not built into the language.  The best
Validation run is seed 4.  Its exact backbone is

\begin{equation}
  h_t=\left\lfloor\alpha R^t\right\rfloor,
  \qquad
  \alpha=\frac{420574650882923}{7668945023518835},
  \qquad
  R=\frac{582942583375009}{250000000000000}.
  \label{eq:geometric-backbone}
\end{equation}

The positive terms start at $t=4$:

\[
  1,3,8,20,47,111,260,607,1416,3303,7703,17963,\ldots.
\]

The prefix search evaluates 311 prefixes on 30 Search tasks, for 9,330 exact
candidate--task runs.  Its objective falls from 0.816322 in round 1 to 0.814236
in round 18.  Validation rejects the slightly overfit round-18 prefix and
freezes the round-12 prefix.  It changes only

\[
  111\mapsto116,\qquad260\mapsto300,\qquad607\mapsto585.
\]

On the 20 independent Validation tasks, this tuning lowers the geometric-mean
cost by 0.304\% relative to the original backbone.  Direct sums fall by
0.180\% in comparisons, 0.399\% in moves, and 0.271\% in total operations.
The final executable sequence is therefore

\begin{equation}
  \boxed{
  \mathcal H_{\rm exec}
  =\{1,3,8,20,47,116,300,585\}
   \cup\{h_t:t\ge12\}.}
  \label{eq:exec-sequence}
\end{equation}

It begins

\[
  1,3,8,20,47,116,300,585,1416,3303,7703,17963,
  41887,97670,227746,531052,\ldots.
\]

For length $N$, Shellsort uses every distinct value below $N$, in decreasing
order.  This is an exact infinite rule and does not require the trained model
at run time.

\subsection{Performance on large sorting tasks}

We freeze the sequence before opening the large Test.  The Test combines five
log-spaced sizes

\[
  10{,}000{,}001,\ 17{,}782{,}795,\ 31{,}622{,}778,\
  56{,}234{,}134,\ 100{,}000{,}000
\]

with all five input types, giving 25 equal-weight tasks with seed 1999.  No
task is stopped early.  The first Test compared the frozen sequence with the
three strongest practical baselines.  After the sequence was fixed, we ran
Shell, Hibbard, Knuth, and Pratt on the same arrays to give a broader view.
These extra baselines do not change selection or the original Test rule.
The implementations use Shell's halving rule, Hibbard's $2^k-1$, Knuth's
$(3^k-1)/2$, all Pratt gaps $2^i3^j$, the merged Sedgewick formulas, Tokuda's
rule, and the Ciura prefix extended by a factor of 2.25.

For operation type $f$ and a task set $\mathcal T$, the raw entry in
\cref{tab:broad-test} is the equal-task geometric mean

\begin{equation}
  \overline C_f(S;\mathcal T)
  =\exp\!\left(
     \frac1{|\mathcal T|}\sum_{T\in\mathcal T}
       \log(C_f(S,T)+1)
    \right)-1.
  \label{eq:test-metric}
\end{equation}

Every task has the same weight.  Counts are shown in billions; these are exact
operations, not wall-clock measurements.  Beside every baseline value, an
arrow reports the percentage change obtained by replacing that row with our
sequence.  A green down-arrow means fewer operations; a red up-arrow means
more.  The last column applies the same metric to the five reversed tasks.
For reference, our direct sums over all 25 tasks are 40.201 billion comparisons
and 29.108 billion moves.

\begin{table*}[ht]
  \centering
  \caption{Broad Test on 25 tasks with $10^7<N\le10^8$.  Raw entries are
  equal-task geometric-mean operation counts in billions.  The adjacent arrow
  is the change from that row to ours: \goodarrow{fewer}; \badarrow{more}.}
  \label{tab:broad-test}
  \setlength{\tabcolsep}{4pt}
  \begin{tabular}{lrrrr}
    \toprule
    Sequence & Comparisons & Moves & Total & Reversed total\\
    \midrule
    \textbf{Ours} & $1.1125$ & $0.7721$ & $1.8907$ & $1.2706$\\
    Tokuda & $1.1205$ \goodarrow{$0.710\%$} & $0.7663$ \badarrow{$0.764\%$}
      & $1.8922$ \goodarrow{$0.079\%$} & $1.3352$ \goodarrow{$4.843\%$}\\
    Ciura & $1.1138$ \goodarrow{$0.116\%$} & $0.7733$ \goodarrow{$0.146\%$}
      & $1.8923$ \goodarrow{$0.086\%$} & $1.3234$ \goodarrow{$3.995\%$}\\
    Sedgewick & $1.1576$ \goodarrow{$3.896\%$} & $0.7590$ \badarrow{$1.738\%$}
      & $1.9252$ \goodarrow{$1.793\%$} & $1.2865$ \goodarrow{$1.239\%$}\\
    Hibbard & $2.6445$ \goodarrow{$57.930\%$} & $2.1020$ \goodarrow{$63.267\%$}
      & $4.7848$ \goodarrow{$60.486\%$} & $1.3775$ \goodarrow{$7.762\%$}\\
    Knuth & $2.6698$ \goodarrow{$58.328\%$} & $2.3866$ \goodarrow{$67.647\%$}
      & $5.0674$ \goodarrow{$62.690\%$} & $1.2447$ \badarrow{$2.076\%$}\\
    Shell & $2.8945$ \goodarrow{$61.563\%$} & $2.4586$ \goodarrow{$68.595\%$}
      & $5.3610$ \goodarrow{$64.733\%$} & $1.8388$ \goodarrow{$30.904\%$}\\
    Pratt & $6.4650$ \goodarrow{$82.791\%$} & $0.6198$ \badarrow{$24.584\%$}
      & $7.1999$ \goodarrow{$73.740\%$} & $6.3582$ \goodarrow{$80.017\%$}\\
    \bottomrule
  \end{tabular}
\end{table*}

Our sequence has the lowest equal-task average among every tested baseline.
Its gain is small against the two closest empirical sequences, Tokuda and
Ciura, but it remains positive on the frozen aggregate.  The margin is larger
against Sedgewick and much larger against the older families.  On reversed
inputs it keeps a clear gain over Sedgewick, Tokuda, and Ciura.  The prefix
refinement improves both comparisons and moves over the learned backbone, so
its gain does not come from trading one operation type for the other.

\section{A \texorpdfstring{Sub-$N^{4/3}$}{Sub-N-to-the-4/3} Bound}
\label{sec:theory}

This section gives the complete proof path for the asymptotic claim.  The
appendix repeats the argument with the longer constant checks and normal-form
details.  The learned model and the experimental distribution play no role
here: we work only with the exact frozen sequence.

\subsection{Completion and main theorem}

Recall the coprime integers

\[
  A=582942583375009,\qquad B=250000000000000,\qquad
  R=A/B,
\]

and the learned tail $h_t=\lfloor\alpha R^t\rfloor$ from
\cref{eq:geometric-backbone}.  Its relevant arithmetic parameter is

\begin{equation}
  \kappa:=\log_A R
  =0.024901468995116044855\ldots.
  \label{eq:main-kappa}
\end{equation}

The floor operation makes the tail practical, but it also creates residue
classes that a fixed future window need not represent.  We remove only this
asymptotic obstruction.  Set $s_0=2724$ and

\begin{equation}
  s_{r+1}=s_r+\lceil\log(s_r+2)\rceil,
  \label{eq:main-companion-indices}
\end{equation}

where $\log$ is natural, and add $h_{s_r}+1$ whenever $h_{s_r}$ occurs.  The
mathematical sequence is

\begin{equation}
  \mathcal H^\sharp_{\rm tune}
  :=\mathcal H_{\rm exec}\cup\{h_{s_r}+1:r\ge0\}.
  \label{eq:main-completed-sequence}
\end{equation}

This is one exact infinite definition, not a change made during evaluation.
Exact integer cross multiplication gives

\begin{equation}
  h_{2723}\le10^{1000}<h_{2724}.
  \label{eq:main-guard}
\end{equation}

Thus every added gap lies beyond the range of the experiments, or any
realistic implementation.

\begin{theorem}[Main bound]
  \label{thm:main-text}
  For every $N\le10^{1000}$, $\mathcal H^\sharp_{\rm tune}$ executes
  exactly the gaps of $\mathcal H_{\rm exec}$.  Below $N$ it contains
  $\Theta(\log N)$ backbone gaps and only
  $\bigO(\log N/\log\log N)=o(\log N)$ companions.  Its worst-case
  operation count satisfies, with
  $\beta=1.024296451657\ldots$,
  \[
    \Omega(N^\beta)\le T(N)\le
    \bigO\!\left(N^\beta\polylog N\right).
  \]
\end{theorem}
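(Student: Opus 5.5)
The plan is to prove the four assertions of \cref{thm:main-text} in increasing order of difficulty. The first is the agreement with $\mathcal H_{\rm exec}$ for $N\le10^{1000}$. Every companion has the form $h_{s_r}+1$ with $s_r\ge s_0=2724$, so by \cref{eq:main-guard} each one exceeds $10^{1000}\ge N$ and is filtered out.

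For the counts I use $h_t=\lfloor\alpha R^t\rfloor$ with $R>1$. The backbone indices with $h_t<N$ form an interval of length $\log N/\log R+O(1)=\Theta(\log N)$. For the companions, \cref{eq:main-companion-indices} gives $s_{r+1}-s_r=\lceil\log(s_r+2)\rceil$. An easy induction then shows $s_r\asymp r\log r$, so the number of $r$ with $s_r\le t$ is $\Theta(t/\log t)$. Taking $t=\Theta(\log N)$ gives $O(\log N/\log\log N)=o(\log N)$ companions, which also shows the companion set has zero density among the gaps.

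The lower bound comes from applying Zang's theorem~\cite{zang2026} directly. The completed sequence differs from the rational-geometric progression $\alpha R^t$ only by floors, the finite tuned prefix, and the sparse companions. I expect to check that the hypotheses of \cite{zang2026} tolerate, or can be reduced past, the companions, which are a zero-density set of gaps at distance $1$ from backbone gaps. Granting that, the theorem's exponent for ratio $A/B$ evaluates to $\beta$. I would also verify numerically that $\beta=1+\kappa/(1+\kappa)$ with $\kappa$ from \cref{eq:main-kappa}; indeed $0.0249015/1.0249015\approx0.0242965$. This identity is what the upper bound must reproduce.

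For the upper bound I use the classical pass lemma of Papernov--Stasevich, Pratt and Sedgewick. After the passes with all gaps larger than the current gap $d$, the array is $g$-ordered for every $g$ in the semigroup $S_d$ generated by those gaps. The $d$-pass then costs $O(N\cdot\#\{k: kd\notin S_d\})$, and trivially it also costs $O(N^2/d)$. The core step is a Frobenius-type estimate showing that every multiple of $d$ of size at least $d^{1+\kappa}\polylog d$ lies in $S_d$. That would bound the first count by $O(d^{\kappa}\polylog d)$. Summing $\min(N^2/d,\,Nd^{\kappa}\polylog)$ over the $O(\log N)$ passes, with the crossover at $d\approx N^{1/(1+\kappa)}$, gives $O(N^{1+\kappa/(1+\kappa)}\polylog N)=O(N^\beta\polylog N)$.

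I would obtain the Frobenius estimate by following the route sketched in the introduction. Take the future window $h_{t+1},\dots,h_{t+m}$ with $m\approx\log t$ slowly growing. Write $h_{t+j}=\alpha R^{t+j}-\epsilon_j$, clear the denominator $B^m$, and compare $B^m h_{t+j}$ with $A^jB^{m-j}h_t$ up to bounded error. Nonnegative integer combinations then realise coefficient moves organised as a carry lattice in base $A/B$. Reaching all residues modulo $d$ costs a factor on the order of $A^m$ relative to $d$. Choosing $m$ so that $A^m\approx d^{\kappa}$, which is exactly $R^m\approx d^{\kappa\log A/\log R}$ with $\kappa=\log_A R$, yields the $d^{1+\kappa}$ threshold. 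The floor errors $\epsilon_j$ can leave a congruence class unreachable. This is where the companions enter: whenever the window contains both $h_{s_r}$ and $h_{s_r}+1$, the difference $1$ generates all residues. Since the gaps $s_{r+1}-s_r\approx\log s_r$ are shorter than the window length $m$, every window contains such a pair.

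This Frobenius/carry-lattice step is the main obstacle, and in particular the requirement that every positive residue be realised with polylogarithmically bounded coefficient overhead uniformly in $t$. My first attempt would be an explicit greedy base-$A$ digit expansion of $kd$ in the window generators. Each digit would be corrected by at most $B$ copies of the unit-difference pair, and I would bound the total overhead by $m\cdot AB\cdot d^{\kappa}$, which is $d^{\kappa}\polylog d$ for fixed $A,B$ and $m=O(\log\log N)$ to $O(\log N)$.
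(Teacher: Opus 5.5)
\textbf{Verdict: there is a genuine gap.} The easy parts of your plan are fine: the guard, both counts, the pass-lemma framework, the balancing at $d\asymp N^{1/(1+\kappa)}$, and the lower bound. No reduction is needed for Zang's theorem, because each companion $h_t+1$ lies within distance one of $\alpha R^t$, so the completed set already stays within a fixed distance of the progression. The gap is the conductor estimate $U_d=\bigO(d^\kappa\polylog d)$, which is the whole content of the upper bound. You leave it open, and the parameter choice you sketch for it fails. Your window length is inconsistent: first $m\approx\log t$, then $A^m\approx d^\kappa$, then $R^m\approx d^{\kappa\log A/\log R}=d$. The cost is also not ``a factor $A^m$ relative to $d$''.

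After clearing denominators as in \eqref{eq:main-coupled}, this method imposes two competing requirements on the multiplier $k$. First, the compound semigroup $\langle w_1,\dots,w_m\rangle$ with $w_i=A^{i-1}B^{m-i}$ represents $X=Q/A\approx kB^m/A$ only above its Frobenius number $\Theta(A^{m-1})$, which requires $k\gtrsim R^m$. Second, the coupling integer $D=kB^m-AX$ is confined by the admissible slope window (width $\Theta(R^{-j})$, see below) to an interval of length $\Theta(R^{-j}kB^m/d)$, yet must lie in a fixed class modulo $A$; this requires $k\gtrsim R^j d/B^m$. The larger threshold, $\max(A^m,d)/B^m$, is minimized only at $A^m\asymp d$, i.e.\ $m=\lceil\log_A h_t\rceil=\kappa t+O(1)$, where $R^m\asymp d/B^m\asymp d^\kappa$. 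With $m\approx\log t$ you get $d/B^m=d^{1-o(1)}$, and with $A^m\approx d^\kappa$ you get $d/B^m=d^{1-\kappa+\kappa^2}$; both bounds are essentially trivial. Moreover, with $m\approx\log t$ the companion spacing $\lceil\log(s_r+2)\rceil$ is comparable to the window rather than shorter, so coverage is not guaranteed. The paper needs $m\asymp t\gg\log t$.

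You also miss half of what the companion does. It supplies the exchange $s=-\mathbf e_{a_j}+\mathbf e_{a_j+1}$, which changes the represented value by exactly one. It also forces the fiber's slope interval $[\sigma_-,\sigma_+]$ to have width at least $R^{-j}$. Without it, the backbone slopes $\theta_t-R^{-i}\theta_{t+i}$ may nearly coincide and collapse the fiber. This is the source of the factor $R^j$, which is polylogarithmic only because $j\le\lceil\log(t+2)\rceil$.

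Your greedy base-$A$ digit expansion, corrected by copies of the unit pair, gives no control of nonnegativity. The paper handles this in four steps:
\begin{enumerate}
  \item It takes a real fiber point with every coefficient at least $S=Cm$, which exists once $k\gtrsim mR^jR^m$.
  \item It writes the difference from an integer reference point in the exact kernel basis $r_i=A\mathbf e_i-B\mathbf e_{i+1}$ together with $s$.
  \item It rounds these coefficients. Each carry shifts the value by $-e_i$ with $|e_i|<A$, so the integer residual is $\bigO(m)$.
  \item It cancels that residual with $\bigO(m)$ steps along $s$; the margin keeps every coefficient nonnegative.
\end{enumerate}

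Finally, once $m\asymp t$, the top gaps lack an executed window. You need the observation that $t+m(t)>T$ forces $d=\Omega(N^{1/(1+\kappa)})$, so the bound $N^2/d$ covers these passes. The finitely many prefix and pre-asymptotic passes also need the fixed unit pair at index $2724$ to get $\bigO(N)$ each.
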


The rest of the section proves the theorem.  The central point is that one
unit companion inside a slowly growing future window removes the congruence
obstruction caused by the rational-geometric floor.

\subsection{One pass as an exact representation problem}

Fix a current gap $d$.  Suppose the larger gaps $a_1,\ldots,a_m$ have already
been executed, and define the representable multipliers

\[
  M_d(a_1,\ldots,a_m)
  :=\left\{k\in\Z_{\ge0}:
    kd=\sum_{i=1}^m c_i a_i,\quad c_i\in\Z_{\ge0}\right\}.
\]

Let $U_d$ be the number of missing positive multipliers, with $U_d=\infty$
when the complement is infinite.  The generalized Frobenius pass lemma of
Incerpi and Sedgewick~\cite[Lemma~1]{incerpi1985} gives

\begin{equation}
  \cost_d(N)=\bigO\bigl(N(U_d+1)\bigr).
  \label{eq:main-frobenius-pass}
\end{equation}

The reason is that the earlier passes prevent inversions whose index distance
is a representable multiple of $d$; only the missing multipliers can remain
inside each $d$-chain.  Independently, insertion sorting the $d$ residue
classes, each of length at most $\lceil N/d\rceil$, gives

\begin{equation}
  \cost_d(N)=\bigO(N^2/d).
  \label{eq:main-trivial-pass}
\end{equation}

We will use the better of the two:

\begin{equation}
  \cost_d(N)=\bigO\!\left(
    \min\{N^2/d,\ N(U_d+1)\}
  \right).
  \label{eq:main-combined-pass}
\end{equation}

It therefore suffices to bound the conductor of $M_d$, namely a threshold
above which every multiplier is representable.

For a backbone gap $d\asymp h_t$, take the next

\begin{equation}
  m=m(t):=\lceil\log_A h_t\rceil
  =\kappa t+O(1)
  \label{eq:main-window}
\end{equation}

backbone gaps.  Since $h_t=\alpha R^t+O(1)$, this window satisfies

\begin{equation}
  A^m=\Theta(d),\qquad
  B^m=\Theta(d^{1-\kappa}),\qquad
  d/B^m=\Theta(d^\kappa),\qquad
  R^m=\Theta(d^\kappa).
  \label{eq:main-window-scales}
\end{equation}

The growth of $m$ is necessary: a fixed window leaves a denominator modulus
that grows with $d$.

\subsection{Denominator clearing, carries, and the unit direction}

Write $a_i=h_{t+i}$ for $1\le i\le m$ and define

\begin{equation}
  E_i=B^ia_i-A^id,\qquad
  q_i=A^iB^{m-i},\qquad
  v_i=B^{m-i}E_i.
  \label{eq:main-qv}
\end{equation}

Then $B^ma_i=dq_i+v_i$.  Hence
$kd=\sum_i c_i a_i$ holds exactly if there is an integer $D$ such that

\begin{equation}
  \sum_i c_iq_i=kB^m-D,
  \qquad
  \sum_i c_iv_i=Dd.
  \label{eq:main-coupled}
\end{equation}

The first coordinate records rational-geometric scale and the second records
all floor errors.  If
$\theta_t=\alpha R^t-\lfloor\alpha R^t\rfloor$, then the error slopes are

\begin{equation}
  \sigma_i:=v_i/q_i
  =\theta_t-R^{-i}\theta_{t+i},
  \qquad |\sigma_i|<1.
  \label{eq:main-slopes}
\end{equation}

For a current companion $d=h_t+1$, every slope is translated by $-1$, so the
same width and uniform bounds hold.

Now divide the first coordinate by its common factor $A$:

\[
  w_i=q_i/A=A^{i-1}B^{m-i},
  \qquad Aw_i=Bw_{i+1}.
\]

These compound weights supply both a conductor and all moves within a fiber.

\begin{lemma}[Compound coordinate structure]
  \label{lem:main-compound}
  The semigroup $\langle w_1,\ldots,w_m\rangle$ contains every integer
  greater than
  \[
    F_m=(B-1)\sum_{i=2}^m w_i-w_1=\bigO(A^{m-1}).
  \]
  Moreover, with
  \[
    r_i=A\mathbf e_i-B\mathbf e_{i+1}\quad(1\le i<m),
  \]
  the full integer kernel is
  \[
    \ker_{\Z}(q)=\bigoplus_{i=1}^{m-1}\Z r_i.
  \]
\end{lemma}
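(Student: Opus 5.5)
We prove both parts by induction on $m$, peeling off the last generator. The structure that makes this work is a factorisation of the weights. For $i<m$ we have $w_i=A^{i-1}B^{m-i}=B\,w_i'$, where $w_i'=A^{i-1}B^{(m-1)-i}$ are the compound weights for window length $m-1$. The last weight is $w_m=A^{m-1}$, and since $\gcd(A,B)=1$ it is coprime to $B$. Thus
\[
  \langle w_1,\ldots,w_m\rangle
  = B\,\langle w_1',\ldots,w_{m-1}'\rangle+\Z_{\ge0}A^{m-1},
\]
a gluing of the smaller compound semigroup. The base case $m=1$ has $w_1=1$, so the semigroup is all of $\Z_{\ge0}$ and $F_1=-w_1=-1$ is consistent with the formula.

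\textbf{Conductor.} We only need the sufficiency direction, that every integer above $F_m$ is representable; sharpness is not required. Take $n>F_m$. Since $\gcd(B,A^{m-1})=1$, choose $y\in\{0,\ldots,B-1\}$ with $yA^{m-1}\equiv n\pmod B$, and write $n=Bx+yA^{m-1}$. Then
\[
  Bx=n-yA^{m-1}>F_m-(B-1)A^{m-1}=B\Bigl((B-1)\sum_{i=2}^{m-1}w_i'-w_1'\Bigr)=B\,F_{m-1}.
\]
So $x>F_{m-1}$, and the induction hypothesis places $x$ in $\langle w'\rangle$. Hence $n\in\langle w\rangle$. The same computation gives the recursion $F_m=B\,F_{m-1}+(B-1)A^{m-1}$, which is the gluing (Delorme-type) identity; our direct argument avoids having to cite it.

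For the size estimate, use $B<A$:
\[
  \sum_{i=2}^m w_i=A^{m-1}\sum_{j\ge0}(B/A)^j\le \frac{A^{m-1}}{1-B/A}.
\]
Therefore $F_m\le (B-1)A/(A-B)\cdot A^{m-1}=\bigO(A^{m-1})$. The constant is fixed because $A$ and $B$ are fixed integers.

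\textbf{Kernel.} Since $q_i=Aw_i$, we have $\ker_\Z(q)=\ker_\Z(w)$. Each $r_i=A\mathbf e_i-B\mathbf e_{i+1}$ lies in the kernel because $Aw_i=Bw_{i+1}$. The $r_i$ are linearly independent, since they are triangular in the coordinates $\mathbf e_{i+1}$.

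For spanning, take $c\in\Z^m$ with $\sum_i c_iw_i=0$ and reduce modulo $B$. Every $w_i$ with $i<m$ is divisible by $B$, so $B\mid c_mA^{m-1}$, and therefore $B\mid c_m$. Replace $c$ by $c+(c_m/B)\,r_{m-1}$. This kills the last coordinate and leaves a kernel vector supported on $1,\ldots,m-1$. Dividing the relation by $B$ gives a kernel vector for $w'$, and the induction hypothesis expresses it in $r_1,\ldots,r_{m-2}$. Hence the $r_i$ form a $\Z$-basis, and $\ker_\Z(q)=\bigoplus_{i=1}^{m-1}\Z r_i$. The base case $m=1$ is trivial, since then $w_1=1$ and the kernel is $\{0\}$.

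\textbf{Expected obstacle.} The only delicate point is the index bookkeeping in the conductor step. One must check that the shifted bound lands exactly on $B\,F_{m-1}$, so the induction closes with the stated closed form rather than a lossy one. Both arguments use the coprimality $\gcd(A,B)=1$ and the fact $B<A$, which hold for the given $A,B$ and can be checked directly.
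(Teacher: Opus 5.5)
Your proof is correct. The kernel half is the paper's argument mirrored. The paper reduces $w\cdot x=0$ modulo $A$, where only $w_1=B^{m-1}$ survives, and strips the first coordinate with a multiple of $r_1$. You reduce modulo $B$, where only $w_m=A^{m-1}$ survives, and strip the last coordinate with $r_{m-1}$.

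The conductor half takes a genuinely different route. The paper brings every representation to the carry normal form $0\le c_i<B$ for $i\ge2$; its reduction steps are exactly the carries $r_i$. It then identifies the Ap\'ery set with respect to the smallest generator $w_1$ as $\{\sum_{i\ge2}c_iw_i:0\le c_i<B\}$ and obtains $F_m$ as its largest element minus $w_1$. You instead peel off the largest generator by a residue choice modulo $B$. You then induct on the gluing $\langle w\rangle=B\langle w'\rangle+\Z_{\ge0}A^{m-1}$, checking that the threshold shifts exactly to $BF_{m-1}$.

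The paper's route yields more: a unique normal form and the exact Frobenius number, which is how the appendix version of the lemma is phrased. But it relies on the Ap\'ery-set characterization of the Frobenius number. Yours is more elementary and proves exactly the containment the statement asserts. It also gives the recursion $F_m=BF_{m-1}+(B-1)A^{m-1}$ for free, so both halves of the lemma run on the same mod-$B$ induction.

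Your explicit bound $F_m\le\frac{(B-1)A}{A-B}A^{m-1}$ is also more careful than the paper's remark that the largest $w_i$ is $A^{m-1}$. Since $m$ grows with $t$, the largest term alone only gives $O(mA^{m-1})$. It is the decay ratio $B/A<1$ that makes the constant uniform in $m$. One small slip: the middle expression in that display should be the finite sum $\sum_{j=0}^{m-2}$, not $\sum_{j\ge0}$, though the inequality is unaffected.
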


\begin{proof}
  Repeatedly replace $B$ copies of $w_{i+1}$ by $A$ copies of $w_i$.
  Every representable integer then has a unique normal form
  $\sum_i c_iw_i$ with $0\le c_i<B$ for $i\ge2$.  The corresponding
  Ap\'ery set has maximum $(B-1)\sum_{i=2}^m w_i$, which gives $F_m$.
  For the kernel, reduce $w\cdot x=0$ modulo $A$.  Since $w_1=B^{m-1}$ is
  invertible modulo $A$, $A$ divides $x_1$.  Subtract the required multiple
  of $r_1$, remove the first coordinate, divide the remaining weights by
  $A$, and induct.  This also proves independence.  The fully expanded
  normal-form check is in \cref{lem:compound,prop:kernel}.
\end{proof}

The carries are local not only in coefficient space.  If

\[
  e_i:=Ba_{i+1}-Aa_i,
\]

then $-B<e_i<A$ and

\begin{equation}
  Av_i-Bv_{i+1}=-B^me_i.
  \label{eq:main-carry-error}
\end{equation}

Thus changing a coefficient vector by $r_i$ preserves the first coordinate
and changes its actual represented gap sum by exactly $-e_i$.

For fixed main mass $Q$, the nonnegative real fiber

\[
  \mathcal P_Q=\{c\in\R_{\ge0}^m:q\cdot c=Q\}
\]

has error slopes exactly in
$[\sigma_-,\sigma_+]$, where $\sigma_-=\min_i\sigma_i$ and
$\sigma_+=\max_i\sigma_i$: the ratio $(v\cdot c)/Q$ is a convex
combination of the $\sigma_i$.  A target slope at distance $\eta$ from the
endpoints has a representation with every coefficient at least $S$ whenever

\begin{equation}
  Q\ge C(S/\eta)\sum_iq_i.
  \label{eq:main-interior}
\end{equation}

Indeed, reserve $S$ in every coordinate and represent the remaining slope by
the two endpoint coordinates.  Rounding the real carry coefficients to their
nearest integers then changes each coefficient by at most $(A+B)/2$ and,
by~\eqref{eq:main-carry-error}, leaves only an integer actual-value residual
of magnitude $O_{A,B}(m)$.  These two elementary facts are stated with all
constants in \cref{lem:interior,lem:rounding}.

The residual is the point at which the companion is essential.  Suppose the
window contains both $a_j=h_{t+j}$ and $a_j+1$.  The two variables have the
same $q_j$, while their slopes differ by

\begin{equation}
  B^m/q_j=R^{-j}.
  \label{eq:main-unit-width}
\end{equation}

Their exchange vector

\[
  s=-\mathbf e_{a_j}+\mathbf e_{a_j+1}
\]

preserves the main coordinate and changes the actual gap sum by exactly one.
Consequently it can correct any integral rounding residual.

\begin{lemma}[Conductor with one unit companion]
  \label{lem:main-conductor}
  For all sufficiently large $t$, let $d\in\{h_t,h_t+1\}$ be a present gap
  and let $m=m(t)$.  If its future window contains a unit pair at an offset
  $1\le j\le m$, then
  \begin{equation}
    U_d\le \cond M_d
    =\bigO_{A,B}\!\left(
      mR^j(d/B^m+1)
    \right).
    \label{eq:main-conductor}
  \end{equation}
\end{lemma}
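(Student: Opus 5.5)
The plan is to show that every multiplier $k$ above the claimed threshold $K:=C\,mR^j(d/B^m+1)$ lies in $M_d$. This is done by solving the coupled system \eqref{eq:main-coupled} in three stages: a real interior solution, integer rounding along the carry lattice, and a final unit-exchange correction. The bound $U_d\le\cond M_d$ is immediate, since every missing positive multiplier lies below the conductor. So it suffices to bound the conductor.

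\textbf{Real interior representation.} Fix $k\ge K$ and look for an integer $D$ with $Q:=kB^m-D$. The second equation requires the target slope $Dd/Q$ to lie in $[\sigma_-,\sigma_+]$. Because the window contains both $a_j$ and $a_j+1$, whose slopes differ by $R^{-j}$ by \eqref{eq:main-unit-width}, this interval has width at least $R^{-j}$. I would choose $D$ so that $Dd/Q$ sits at distance $\eta\asymp R^{-j}$ from both endpoints. Since $D\mapsto Dd/(kB^m-D)$ moves in steps of about $d/(kB^m)$, such a $D$ exists once $kB^m\gtrsim dR^j$, that is, $k\gtrsim R^j d/B^m$. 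This is exactly the first term of \eqref{eq:main-conductor}.

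Next I apply \eqref{eq:main-interior} with $S\asymp A+B$, large enough to absorb the later rounding. Here $\sum_i q_i=A\sum_i w_i=O(A^m)=O(d)$. The required condition becomes $Q\ge C(S/\eta)\,d$, which is again of order $R^jd$ and is satisfied when $k\gtrsim R^jd/B^m$. The result is a real vector $c^\ast\ge S$ in the fiber $\mathcal P_Q$ with $v\cdot c^\ast=Dd$.

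\textbf{Integer rounding.} I would first pick an integer $c^{(0)}$ with $q\cdot c^{(0)}=Q$, placed near $c^\ast$. To do this, use \cref{lem:main-compound}: the normal form gives some integer representation of $Q/A$ (with $A\mid Q$ arranged by the choice of $D$ modulo $A$, or handled by the companion coordinate). Then write $c^\ast-c^{(0)}$ in the real span of the kernel basis $r_i$ and round those carry coefficients to the nearest integers. Each coordinate moves by at most $(A+B)/2$, so nonnegativity is preserved because $S\ge A+B$. By \eqref{eq:main-carry-error}, each carry changes the actual gap sum by $-e_i$, an integer of size $O(A+B)$. The total actual-value residual is therefore an integer $\rho$ with $|\rho|=O_{A,B}(m)$; this is \cref{lem:rounding}.

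\textbf{Unit exchange.} Apply the exchange vector $s$ exactly $|\rho|$ times, with the appropriate sign. This keeps $q\cdot c$ fixed and moves the represented sum by one per step. It needs the coefficients of $a_j$ and $a_j+1$ to be at least $O(m)$, so I enlarge the reserve to $S\asymp m(A+B)$. Doing so multiplies the threshold by $m$ and produces the factor $m$ in \eqref{eq:main-conductor}. The additive $+1$ term covers small $d/B^m$, where the lattice scale $B^m$ dominates.

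\textbf{Main obstacle.} The hard step is the bookkeeping that links the real fiber solution to an integer one. One must check that the congruence conditions on $Q$ (divisibility by $A$ and the Frobenius bound $F_m=O(A^{m-1})$) are absorbed either by freedom in $D$ or by the companion. One must also check that the slope margin $\eta$ really scales as $R^{-j}$ rather than as the possibly smaller spread among the other $\sigma_i$. The case $d=h_t+1$ is handled by the uniform slope shift noted after \eqref{eq:main-slopes}.
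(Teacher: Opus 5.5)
Your proposal is correct and follows the paper's proof essentially step for step. You pick $Q=kB^m-D$ divisible by $A$ whose slope $Dd/Q$ lies $\Theta(R^{-j})$ inside the unit-pair interval (the paper parametrizes by $X=Q/A$ rather than constraining $D\bmod A$), take a compound-semigroup reference point using $X>F_m$, build an interior real point with margin $S\asymp m$, round along the carries, and remove the $O(m)$ integer residual along $s$.

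Two small repairs are needed. First, $A\mid Q$ must come from the choice of $D$; the companion cannot help, since its main weight $q_j$ is also a multiple of $A$. Second, $c^\ast$ has a positive companion coordinate, so $c^\ast-c^{(0)}$ is not in the span of the $r_i$ alone: decompose it in the augmented kernel $\bigoplus_i\Z r_i\oplus\Z s$ and round the $s$-coefficient as well, which only adds $1/2$ to the residual.
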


\begin{proof}
  Fix a large multiplier $k$ and choose the first coordinate in
  \eqref{eq:main-coupled} as $Q=kB^m-D=AX$.  For a desired error slope
  $\tau=Dd/Q$, solving for $X$ gives

  \[
    X(\tau)=\frac{dkB^m}{A(d+\tau)}.
  \]

  On the bounded slope interval,
  $|dX/d\tau|=\Theta_{A,B}(kB^m/d)$.  By~\eqref{eq:main-unit-width}, after
  discarding the outer thirds of the unit-pair interval, its inverse image has
  length $\Omega_{A,B}((kB^m/d)R^{-j})$.  If
  $k\ge C R^j(d/B^m+1)$, this interval contains an integer $X$ and, after
  increasing $C$, also has $X>F_m$.  The compound threshold therefore gives
  a nonnegative integer reference point with first coordinate $Q$.

  The chosen slope is $\eta=\Theta(R^{-j})$ away from the endpoints.
  Since $\sum_iq_i=O_{A,B}(A^m)$, the interior condition
  \eqref{eq:main-interior} holds with margin $S=C_{A,B}m$ once

  \[
    k\ge C_{A,B}mR^jR^m
      =O_{A,B}\!\left(mR^j(d/B^m+1)\right),
  \]

  using~\eqref{eq:main-window-scales}.  Decompose the difference between the
  integer reference point and this interior real point into adjacent carries
  and the unit direction.  Round the carry coefficients.  Their remaining
  actual-value error is an integer $r=O_{A,B}(m)$; move $-r$ steps along $s$
  to remove it exactly.  The reserved margin keeps all coefficients
  nonnegative.  The two equations in~\eqref{eq:main-coupled} now give
  $\sum_i c_i a_i=kd$.  Thus every $k$ above the stated scale is
  representable.  The augmented-kernel bookkeeping is written explicitly in
  \cref{prop:augmented-kernel,thm:conductor}.
\end{proof}

This construction removes all residue classes; it does not assume that the
future gaps happen to have a favorable greatest common divisor.

\subsection{Sparse placement covers every window}

For every sufficiently large $t$, recurrence
\eqref{eq:main-companion-indices} puts the next companion at offset

\begin{equation}
  1\le j\le\lceil\log(t+2)\rceil<m(t).
  \label{eq:main-nearby-companion}
\end{equation}

Hence every complete future window contains a unit pair.  Since
$j<m(t)$, completeness also puts $h_{t+j+1}<N$; for all sufficiently large
$t$, $h_{t+j}+1<h_{t+j+1}$.  Thus the companion itself is below $N$ and both
members of the unit pair have been executed before the current pass.  Since
$t=\Theta(\log d)$,

\begin{equation}
  R^j\le R(t+2)^{\log R}=\polylog d.
  \label{eq:main-unit-polylog}
\end{equation}

Combining \cref{lem:main-conductor} with
\eqref{eq:main-window-scales} and $m=O(\log d)$ yields

\begin{equation}
  U_d=\bigO(d^\kappa\polylog d)
  \label{eq:main-holes}
\end{equation}

for every sufficiently large internal backbone or companion pass.

The companions remain zero-density.  If $T=\Theta(\log N)$ is the largest
backbone index below $N$, recurrence~\eqref{eq:main-companion-indices}
advances by $\Theta(\log s)$ near index $s$.  Comparing the count with
$\int_2^T ds/\log s$, or summing over dyadic index blocks, gives

\begin{equation}
  \#\{r:h_{s_r}<N\}
  =\bigO\!\left(\frac{\log N}{\log\log N}\right)
  =o(\log N).
  \label{eq:main-companion-count}
\end{equation}

Together with the exact guard~\eqref{eq:main-guard}, this proves the first
two claims of \cref{thm:main-text}.

\subsection{Balancing and summing all passes}

Substituting~\eqref{eq:main-holes} into
\eqref{eq:main-combined-pass} gives, for every complete-window pass,

\begin{equation}
  \cost_d(N)=\bigO\!\left(
    \min\{N^2/d,\ Nd^\kappa\polylog d\}
  \right).
  \label{eq:main-internal-pass}
\end{equation}

Ignoring polylogarithmic factors, the two terms balance at
$d\asymp N^{1/(1+\kappa)}$.  Their common exponent is

\begin{equation}
  \beta
  =2-\frac1{1+\kappa}
  =1+\frac{\kappa}{1+\kappa}
  =1.02429645165747606869\ldots.
  \label{eq:main-final-exponent}
\end{equation}

To close the argument, we now sum every kind of pass rather than multiplying a
worst per-pass bound by the number of gaps.  This also makes explicit why the
incomplete windows at the top and the finitely tuned prefix do not hide a
larger term.

\begin{proposition}[Global pass sum]
  \label{prop:main-global-sum}
  Let
  \[
    D:=N^{1/(1+\kappa)}.
  \]
  The total cost of all backbone passes, companion passes, and tuned-prefix
  passes below $N$ is $O(N^\beta\polylog N)$, with $\beta$ given by
  \eqref{eq:main-final-exponent}.
\end{proposition}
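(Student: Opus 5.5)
We split the gaps below $N$ into three kinds and bound the sum of \eqref{eq:main-combined-pass} over each kind separately.

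\textbf{Gap counts.} The kinds are: (i) the finitely many tuned-prefix gaps and the finitely many small backbone or companion gaps for which \eqref{eq:main-holes} has not yet taken effect; (ii) the \emph{top} gaps, meaning backbone gaps $h_t<N$ whose future window of $m(t)$ gaps is not entirely below $N$, together with their companions; (iii) all remaining internal gaps, whose windows are complete. Since $h_t\asymp R^t$, there are $\Theta(\log N)$ backbone gaps below $N$. By \eqref{eq:main-companion-count} there are only $o(\log N)$ companions. So every kind contains $O(\log N)$ gaps, and it is enough to bound each pass by $O(N^\beta\polylog N)$. Summing over the gaps then costs only one extra logarithm.

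\textbf{Small and internal gaps.} A kind (i) gap is a constant, and the prefix contains $1$. Each such pass is an insertion sort whose input is already sorted by the larger gaps. I would bound it with \eqref{eq:main-frobenius-pass}: a fixed set of earlier gaps, such as the first backbone gaps above it, generates a semigroup with finite complement, so the pass costs $O(N)$. The natural fallback choice is the pair $(3,8)$, which is coprime, so the $1$-pass and the $3$-pass are $O(N)$. More generally, it suffices that each small gap has a finite Frobenius complement relative to some larger gaps. If that fails for some prefix gap, the constant-size cases can be checked directly, or $O(N^2/d)$ can be used once $d$ is large. For kind (iii), \eqref{eq:main-internal-pass} gives $\min\{N^2/d,\,Nd^\kappa\polylog d\}$. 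If $d\le D$, the second term is at most $N D^\kappa\polylog N=N^\beta\polylog N$. If $d>D$, the first term is at most $N^2/D=N^{\beta}$. Because the gaps grow geometrically, both partial sums are in fact dominated by geometric series centered at $D$, so we do not even need the factor $\log N$.

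\textbf{Top gaps (main obstacle).} A top gap has $h_{t+m(t)}\ge N$, so $d\cdot R^{m(t)}\gtrsim N$. Using $R^m=\Theta(d^\kappa)$ from \eqref{eq:main-window-scales}, this gives $d^{1+\kappa}\gtrsim N$, that is, $d\gtrsim D$ up to constants. Then $N^2/d=O(N^2/D)=O(N^\beta)$. There are $O(m)=O(\log N)$ top gaps, and each is bounded by the trivial estimate \eqref{eq:main-trivial-pass}, so no conductor estimate is needed. A companion $h_{s}+1$ in this range satisfies the same lower bound. The delicate point is that the window length must be defined consistently, with $m(t)$ evaluated at the current gap, so that incompleteness really forces $d\gtrsim D$. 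I would check this by verifying $h_{t+m}\asymp d^{1+\kappa}$ exactly from $A^m=\Theta(d)$. Adding the three kinds gives $O(N^\beta\polylog N)$ in total.
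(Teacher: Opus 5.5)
Your handling of the complete-window passes (split at $D$) and of the top passes is the paper's argument. An incomplete window means $h_{t+m(t)}\ge N$, and $h_{t+m}\asymp d\,R^{m}\asymp d^{1+\kappa}$ then forces $d\gtrsim D$. Replacing the paper's geometric sums by ``$O(\log N)$ passes times the worst single pass'' costs only a logarithm, which $\polylog N$ absorbs.

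The gap is in kind (i). For a fixed gap $d$, \eqref{eq:main-frobenius-pass} gives $O(N)$ only if some fixed set of \emph{larger}, already executed gaps has a gcd dividing $d$. You assert that such a set exists but never show it, and your fallbacks do not work:
\begin{itemize}
  \item The pair $(3,8)$ cannot serve the $3$-pass, because $3$ is the current gap, not an earlier one. The first two earlier gaps $8,20$ have gcd $4$, so whether ``the first backbone gaps above it'' suffice is a gcd fact you would have to verify.
  \item The insertion bound $O(N^2/d)$ is $\Theta(N^2)$ when $d$ is a constant, so it can never be used for this class.
  \item No finite direct check can certify an asymptotic bound in $N$.
\end{itemize}
If the finite-complement property really failed for some fixed $d$, nothing in your proposal would rescue the estimate.

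The missing idea is the one the completion was built to supply. For each fixed early gap $d$, take the first unit pair $h_{s_r},h_{s_r}+1$ lying above $d$. These are two consecutive fixed integers, so they are coprime and generate a numerical semigroup with a fixed conductor. Once $N>h_{s_r}+1$, both members are executed before the $d$-pass. Hence only $O(1)$ multiples of $d$ are missing, and the pass costs $O(N)$. The finitely many smaller values of $N$ are absorbed into the constant.

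This argument treats the tuned prefix and all pre-asymptotic backbone and companion gaps uniformly. For the tuned prefix alone you could instead check, e.g., $\gcd(1416,3303,7703)=1$. The finitely many pre-asymptotic backbone gaps would then need their own gcd checks, which the unit pair makes unnecessary. With this replacement your proof is complete.
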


\begin{proof}
  We partition the passes into four classes.  First consider complete-window
  passes with $d\le D$.  For these, use the representation side of
  \eqref{eq:main-internal-pass}:

  \begin{equation}
    \sum_{\substack{d\le D\\\text{complete}}}\cost_d(N)
    \le N\polylog N
       \sum_{\substack{d\le D\\\text{complete}}}d^\kappa.
    \label{eq:main-small-pass-sum}
  \end{equation}

  Apart from finitely many initial values, $h_t=\Theta(R^t)$.  At a companion
  index there are only the two values $h_t$ and $h_t+1$, so adding companions
  changes a geometric scale sum by at most a constant factor.  Consequently,

  \begin{equation}
    \sum_{d\le D}d^\kappa=O(D^\kappa),
    \qquad
    \sum_{d>D}\frac1d=O(D^{-1}).
    \label{eq:main-geometric-sums}
  \end{equation}

  Equation~\eqref{eq:main-small-pass-sum} is therefore
  $O(ND^\kappa\polylog N)$.

  Second, for complete-window passes with $d>D$, use the insertion side of
  \eqref{eq:main-internal-pass}.  The second estimate in
  \eqref{eq:main-geometric-sums} gives

  \begin{equation}
    \sum_{\substack{d>D\\\text{complete}}}\cost_d(N)
    \le N^2\sum_{d>D}\frac1d
    =O(N^2/D).
    \label{eq:main-large-pass-sum}
  \end{equation}

  Third, consider the top gaps, for which fewer than $m(t)$ future backbone
  gaps lie below $N$.  Let $h_T<N$ be the largest backbone gap.  An incomplete
  window satisfies $t+m(t)>T$.  Since $m(t)=\kappa t+O(1)$,

  \begin{equation}
    t>\frac{T}{1+\kappa}-O(1),
    \qquad
    h_t=\Omega\!\left(N^{1/(1+\kappa)}\right)=\Omega(D).
    \label{eq:main-top-scale}
  \end{equation}

  The same statement holds for $h_t+1$.  Thus every incomplete-window pass is
  already on the large-gap side.  Applying $O(N^2/d)$ and the reciprocal sum
  in~\eqref{eq:main-geometric-sums} bounds all of them together by
  $O(N^2/D)$; no conductor estimate is needed at the boundary.

  Fourth, the learned prefix modifies only finitely many gaps.  For each fixed
  early gap $d$, the first later unit pair consists of two consecutive fixed
  integers.  They generate a numerical semigroup with a fixed conductor, so
  only $O(1)$ multiples of $d$ are missing and
  \eqref{eq:main-frobenius-pass} gives $O(N)$ for that pass.  Before the first
  unit pair enters the gap list, $N$ ranges over a fixed finite interval and is
  absorbed by the constant in the asymptotic bound.  Hence all tuned-prefix and
  other pre-asymptotic passes contribute only $O(N)$.

  Finally, the definition of $D$ gives the two identical powers

  \begin{equation}
    ND^\kappa
      =N^{1+\kappa/(1+\kappa)}=N^\beta,
    \qquad
    \frac{N^2}{D}
      =N^{2-1/(1+\kappa)}=N^\beta.
    \label{eq:main-two-sides-meet}
  \end{equation}

  Adding the four classes proves the proposition.
\end{proof}

Substituting the exact value of $\kappa$ from~\eqref{eq:main-kappa} into
\eqref{eq:main-final-exponent} closes the upper-bound part of
\cref{thm:main-text}.  For the exact completed sequence in
\eqref{eq:main-completed-sequence}, we obtain the promised bound:

\begin{equation}
  \boxed{
  T_{\mathcal H^\sharp_{\rm tune}}(N)
  =\bigO\!\left(
    N^{1.024296451657\ldots}\polylog N
  \right).}
  \label{eq:main-boxed-bound}
\end{equation}

This completes the proof.  The appendix expands the normal-form, constant,
kernel, and boundary checks used above.  It introduces no additional
hypothesis.\par

\subsection{A matching polynomial lower bound}

The upper-bound exponent is not only an artifact of our analysis.  A recent
lower bound of Zang~\cite[Theorem~3]{zang2026} applies to every gap sequence
whose terms remain within a fixed distance of a rational geometric progression.
That condition holds here.  Each backbone term differs from
$\alpha(A/B)^t$ by less than one, each companion differs from the same term by
at most one, and the finitely tuned prefix can be absorbed into one fixed
distance constant.

Write $\gamma=\log_B A$.  Zang's theorem therefore gives a worst-case swap
count of

\begin{equation}
  \Omega\!\left(
    N^{1+(\gamma-1)/(2\gamma-1)}
  \right).
  \label{eq:main-matching-lower}
\end{equation}

Since $\kappa=1-1/\gamma$,

\[
  1+\frac{\gamma-1}{2\gamma-1}
  =1+\frac{\kappa}{1+\kappa}
  =\beta.
\]

Every swap forces at least one counted sorting operation, so this is also an
$\Omega(N^\beta)$ lower bound for our operation count.  Combined with
\eqref{eq:main-boxed-bound}, the polynomial exponent is tight; only the
polylogarithmic factor remains between the bounds.  The lower bound is an
external 2026 preprint result, while the upper bound above is our contribution.

\section{Conclusion}
\label{sec:conclusion}

We treat a Shellsort gap sequence as an algorithm to discover, run, and prove.
Our system searches valid gap-generator programs and learns from exact sorting
costs.  Five runs independently find rational-geometric programs.  Freezing
the best run and tuning only its finite prefix gives

\[
  1,3,8,20,47,116,300,585,1416,3303,7703,17963,\ldots,
\]

with the exact rule in~\cref{eq:exec-sequence}.  On the 25 large Test tasks, it
has the lowest equal-task average among all tested classical sequences.  This
claim is about comparisons and moves over the full input mixture, not wall-clock
time or separate wins on every input type.

Beyond $10^{1000}$, we add a zero-density set of unit companions.  They make
the missing multiples in the pass analysis small enough to establish the
worst-case bound in~\eqref{eq:main-boxed-bound}.  This is below Sedgewick's
$N^{4/3}$ bound for a short sparse construction.  The added gaps do not change
any run below the guard threshold.  Denser or more complex sequences, including
Pratt's, already have stronger bounds; our result instead joins a short
practical generator and a sub-$N^{4/3}$ proof in one explicit sequence.  Zang's
rational-geometric lower bound applies to the same completed sequence and has
the identical exponent $1.024296\ldots$~\cite{zang2026}; the upper and lower
bounds therefore differ only by a polylogarithmic factor.

Three questions remain.  Can the pure geometric backbone satisfy the same
bound without companions?  Can average-case theory explain the measured gain?
And does the sequence keep its advantage under a fixed hardware timing test?
The larger lesson is that learning can propose an algorithm, while exact
execution and mathematics remain responsible for checking it.

\clearpage
\appendix
\section{Technical Details for the Main Bound}
\label{app:full-proof}

This appendix expands the normal forms, uniform constant checks, kernel
bookkeeping, and boundary cases used in \cref{sec:theory}.  It introduces no
new hypothesis or proof step.  As in the main proof, the frozen output is an
exact mathematical object; the neural model and experimental distribution play
no role.

\subsection{Final sequence and main theorem}

Recall the coprime integers

\[
  A=582942583375009,\qquad B=250000000000000,\qquad
  R=A/B>1
\]

with $\gcd(A,B)=1$ and $A>B\ge2$.  All asymptotic window and conductor
statements below are for $t\ge t_*$, where $t_*$ is a fixed index large enough
that $h_t\ge4$, $m(t)\ge2$, and $h_{t+1}>h_t+1$.  The finitely many smaller
indices are handled separately in the final summation.

Recall also the backbone $h_t=\lfloor\alpha R^t\rfloor$ from
\cref{eq:geometric-backbone}.  Define

\begin{equation}
  \kappa:=\log_A R
  =0.024901468995116044855\ldots.
  \label{eq:kappa}
\end{equation}

The practical sequence $\mathcal H_{\rm exec}$ is given by
\cref{eq:exec-sequence}.  To define its asymptotic completion, fix

\[
  N_{\rm guard}=10^{1000},\qquad s_0=2724,
\]

and recursively set

\begin{equation}
  s_{r+1}=s_r+\left\lceil\log(s_r+2)\right\rceil,
  \label{eq:companion-indices}
\end{equation}

where $\log$ is natural.  At every $s_r$, add the unit companion
$h_{s_r}+1$.  The final infinite set is

\begin{equation}
  \mathcal H^\sharp_{\rm tune}
  :=\mathcal H_{\rm exec}
    \cup\{h_{s_r}+1:r\ge0\}.
  \label{eq:completed-sequence}
\end{equation}

As usual, the Shellsort for length $N$ uses all distinct members of this set
below $N$, in decreasing order.

\begin{theorem}[Main asymptotic bound]
  \label{thm:main}
  The sequence $\mathcal H^\sharp_{\rm tune}$ has the following properties.
  \begin{enumerate}
    \item For every $N\le10^{1000}$, its gap list is exactly the list emitted
    by $\mathcal H_{\rm exec}$.  Hence all passes, comparisons, moves, and
    memory accesses are identical.
    \item The number of original gaps below $N$ is $\Theta(\log N)$, whereas
    the number of unit companions is
    \[
      \bigO\!\left(\frac{\log N}{\log\log N}\right)=o(\log N).
    \]
    \item Its worst-case operation count is
    \begin{equation}
      T_{\mathcal H^\sharp_{\rm tune}}(N)
      =\bigO\!\left(
        N^{1.024296451657\ldots}\polylog N
      \right).
      \label{eq:main-bound}
    \end{equation}
    \item Zang's rational-geometric lower bound~\cite[Theorem~3]{zang2026}
    applies to this completed sequence and gives
    \[
      T_{\mathcal H^\sharp_{\rm tune}}(N)
      =\Omega\!\left(N^{1.024296451657\ldots}\right).
    \]
    Hence the polynomial exponent in~\eqref{eq:main-bound} is tight, up to its
    polylogarithmic factor.
  \end{enumerate}
\end{theorem}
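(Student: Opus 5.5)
The plan is to prove the four items in order, reusing the main-text argument and making its constants explicit.

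\emph{Item 1.} This follows from the guard~\eqref{eq:main-guard}. Since the backbone is strictly increasing and $s_r\ge s_0=2724$, every companion satisfies $h_{s_r}+1>h_{2724}>10^{1000}\ge N$. For $N\le10^{1000}$ the filter ``all distinct members below $N$'' therefore discards every companion, and the gap list coincides with that of $\mathcal H_{\rm exec}$. The only real work is certifying~\eqref{eq:main-guard} by exact integer cross-multiplication: compare $\alpha A^{2723}$ and $\alpha A^{2724}$ with $10^{1000}B^{t}$ after clearing the denominator of $\alpha$. This is a finite check.

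\emph{Item 2.} Since $h_t=\alpha R^t+O(1)$, the backbone indices below $N$ are exactly $t\le T$ with $T=\log_R N+O(1)=\Theta(\log N)$. For companions, the recurrence~\eqref{eq:companion-indices} has step $\lceil\log(s+2)\rceil\ge\log(T/2)$ whenever $s\ge T/2$. Hence at most $O(T/\log T)$ indices $s_r$ fall in $[T/2,T]$. Summing over dyadic blocks $[T/2^{k+1},T/2^k]$ gives $O(\sum_k (T/2^k)/\log(T/2^k))=O(T/\log T)$; the small blocks are handled by the trivial step $\ge1$ bound together with geometric decay. This yields $O(\log N/\log\log N)$.

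\emph{Item 3.} I would first expand the ingredients of Lemma~\ref{lem:main-conductor}.
\begin{enumerate}
\item The compound normal form and kernel of Lemma~\ref{lem:main-compound}.
\item The carry-error identity~\eqref{eq:main-carry-error}, checked directly from $E_i$ and $e_i$.
\item The interior-point estimate~\eqref{eq:main-interior}, with a uniform constant $C$ depending only on $A,B$ once slopes lie in a fixed bounded interval and the gap $\eta$ is measured relative to the endpoint slopes.
\item The rounding lemma: replacing real carry coefficients $x_i$ on $r_i$ by their nearest integers moves coefficients by at most $(A+B)/2$, and moves the represented value by at most $\sum|e_i|/2=O(m)$.
\end{enumerate}
The one delicate point is that the difference between the integer reference point and the real interior point lies in $\ker q\otimes\R$ plus a multiple of the unit exchange $s$. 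This is the augmented-kernel proposition: $\ker_\Z$ of the first coordinate on the augmented variable set is $\bigoplus\Z r_i\oplus\Z s$. It follows from Lemma~\ref{lem:main-compound} because the companion shares $q_j$.

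With the conductor bound, the placement~\eqref{eq:main-nearby-companion} gives $j\le\lceil\log(t+2)\rceil<m(t)=\kappa t+O(1)$ for large $t$, so $R^j=\polylog d$. This yields~\eqref{eq:main-holes}, and Proposition~\ref{prop:main-global-sum} finishes the bound. In the sum I would verify two facts: the top-window estimate~\eqref{eq:main-top-scale}, and that prefix passes have bounded holes once two consecutive integers $h_{s_r},h_{s_r}+1$ lie above them, since $\langle n,n+1\rangle$ has conductor $n^2-n$.

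\emph{Item 4.} This is a citation of Zang's Theorem~3. The hypothesis is a uniform distance to $\alpha R^t$. For the ``swap count implies operation count'' step, each swap costs at least one write. The exponent identity uses $\kappa=\log_A R=1-\log_A B=1-1/\gamma$.

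\emph{Main obstacle.} I expect the hardest step to be the uniformity in item 3. The constants in the interior and rounding lemmas must not depend on $m$, except through the explicit factor $m$ in the margin $S$. The chosen target slope must also stay $\Theta(R^{-j})$ inside $[\sigma_-,\sigma_+]$ simultaneously for $d=h_t$ and $d=h_t+1$.

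My first attempt would use the explicit formula~\eqref{eq:main-slopes}. It shows the unit pair's two slopes differ by exactly $R^{-j}$ and both lie in $(-1,1)$ shifted by at most $1$. Choosing the middle third of that pair interval then gives $\eta\ge R^{-j}/3$ uniformly, independent of the other $\theta_{t+i}$.
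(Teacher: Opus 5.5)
Your proposal is correct and follows the paper's own proof essentially step for step. It uses the exact integer guard check for item~1 and the dyadic index-block count of companions for item~2. For item~3 it builds the conductor bound the same way: a compound-semigroup reference point, an interior real point with margin $S=\Theta(m)$, nearest-integer rounding in the augmented kernel $\bigoplus_i\Z r_i\oplus\Z s$, and exact correction along the unit exchange, followed by the polylogarithmic placement bound and the four-class pass sum. Item~4 is the citation of Zang's theorem with $\kappa=1-1/\gamma$.

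The only differences are cosmetic. You make the prefix conductor explicit as $n^2-n$ for $\langle n,n+1\rangle$. For the guard you compare with $10^{1000}$, which suffices for item~1; the paper compares with $10^{1000}+1$ so as to certify both inequalities in its guard exactly.
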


The completion in~\cref{eq:completed-sequence} is part of the theorem, not an
assumption about the learned backbone.  We do not claim that the uncompleted
geometric set alone satisfies~\eqref{eq:main-bound}.  The point of the guard
threshold is that the theorem and the finite experiment refer to exactly the
same visible sequence while the infinite tail is made arithmetically
certifiable.

\subsection{Pass cost as a multiplier-semigroup problem}

Fix a current gap $d$.  Before the $d$-pass, suppose the larger gaps
$a_1,\ldots,a_m$ have already been executed.  Define

\[
  M_d(a_1,\ldots,a_m)
  =\left\{k\in\Z_{\ge0}:
    kd=\sum_{i=1}^m c_i a_i,\quad c_i\in\Z_{\ge0}
  \right\}
\]

and let $U_d$ be the number of missing positive multipliers, with
$U_d=\infty$ if the complement is infinite.  Incerpi and Sedgewick's
generalized Frobenius pass lemma gives~\cite[Lemma~1,
pp.~212--213]{incerpi1985}

\begin{equation}
  \cost_d(N)=\bigO\bigl(N(U_d+1)\bigr).
  \label{eq:frobenius-pass}
\end{equation}

There is also an elementary bound.  The $d$ residue classes each have length
at most $\lceil N/d\rceil$, so insertion-sorting them costs

\begin{equation}
  \cost_d(N)=\bigO(N^2/d).
  \label{eq:trivial-pass}
\end{equation}

Combining them,

\begin{equation}
  \cost_d(N)=\bigO\!\left(
    \min\left\{N^2/d,\ N(U_d+1)\right\}
  \right).
  \label{eq:combined-pass}
\end{equation}

It remains to prove that all sufficiently large $k$ belong to $M_d$.  If the
least such threshold is $K$, then trivially $U_d\le K-1$.

\subsection{The logarithmic future window}

For a backbone gap $h_t$, take

\begin{equation}
  m(t):=\min\{m\in\Z_{\ge0}:A^m\ge h_t\}
       =\lceil\log_A h_t\rceil.
  \label{eq:window}
\end{equation}

Since $h_t=\alpha R^t+O(1)$,

\begin{equation}
  m(t)=\kappa t+O(1).
  \label{eq:window-index}
\end{equation}

For $d\asymp h_t$ and $m=m(t)$ we have

\begin{equation}
  A^m=\Theta(d),\qquad
  B^m=\Theta(d^{1-\kappa}),\qquad
  \frac d{B^m}=\Theta(d^\kappa),\qquad
  R^m=\Theta(d^\kappa).
  \label{eq:window-scales}
\end{equation}

Indeed, $A^{m-1}<h_t\le A^m$.  This remains true up to fixed multiplicative
constants for either possible tail gap $d\in\{h_t,h_t+1\}$.  Since
$R=A^\kappa$, we have $R^m=(A^m)^\kappa=\Theta(d^\kappa)$ and
$B^m=A^m/R^m=\Theta(d^{1-\kappa})$, proving all four estimates with constants
independent of $t$.

The increasing window is essential.  A fixed number of rational-geometric
future gaps retains a denominator-induced congruence obstruction whose modulus
grows with $d$; it cannot yield the sublinear missing-multiplier count needed
below.

\subsection{Exact denominator clearing}

Set $d=h_t$ for the moment and take the future backbone gaps

\[
  a_i=h_{t+i},\qquad1\le i\le m.
\]

Define

\begin{equation}
  E_i=B^i a_i-A^i d,\qquad
  q_i=A^iB^{m-i},\qquad
  v_i=B^{m-i}E_i.
  \label{eq:qv-definition}
\end{equation}

Then

\begin{equation}
  B^m a_i=dq_i+v_i.
  \label{eq:two-coordinate}
\end{equation}

Consequently $kd=\sum_i c_i a_i$ holds if and only if there is an integer
$D$ such that

\begin{equation}
  \sum_i c_iq_i=kB^m-D,\qquad
  \sum_i c_iv_i=Dd.
  \label{eq:coupled-coordinates}
\end{equation}

This is an exact two-coordinate reformulation.  The $q$-coordinate captures
the rational geometric scale, the $v$-coordinate captures flooring error, and
$D$ couples them.

For uniform error control, write

\[
  \theta_t=\alpha R^t-\lfloor\alpha R^t\rfloor\in[0,1).
\]

A direct calculation gives

\begin{equation}
  \sigma_i:=\frac{v_i}{q_i}
  =\frac{E_i}{A^i}
  =\theta_t-R^{-i}\theta_{t+i},
  \qquad |\sigma_i|<1.
  \label{eq:slopes}
\end{equation}

If the current gap is the auxiliary value $d=h_t+1$, every slope is translated
by $-1$; their width is unchanged and $|\sigma_i|<2$.  All constants below may
therefore depend on $A,B$ but not on $t,m,$ or $d$.

\subsection{Compound coordinates and their complete carry kernel}

Every $q_i$ contains a factor $A$.  Put

\[
  w_i=q_i/A=A^{i-1}B^{m-i}.
\]

Then

\[
  (w_1,\ldots,w_m)
  =(B^{m-1},AB^{m-2},\ldots,A^{m-1}),
  \qquad Aw_i=Bw_{i+1}.
\]

These are a constant-ratio compound sequence.  The following standard normal
form is a specialization of the compound numerical-semigroup theory of
Kiers, O'Neill, and Ponomarenko~\cite[Proposition~12, Theorem~15,
Corollary~16]{kiers2016compound}; we include the argument needed here.

\begin{lemma}[Compound threshold]
  \label{lem:compound}
  For the semigroup $\mathcal S_m=\langle w_1,\ldots,w_m\rangle$, every
  representable integer has a unique normal form
  \[
    X=\sum_{i=1}^m c_iw_i,\qquad0\le c_i<B\quad(2\le i\le m).
  \]
  Its Frobenius number is
  \begin{equation}
    F_m=(B-1)\sum_{i=2}^m w_i-w_1=\bigO(A^{m-1}).
    \label{eq:compound-frobenius}
  \end{equation}
  Hence every $X>F_m$ lies in $\mathcal S_m$.
\end{lemma}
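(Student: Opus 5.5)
The proof is a carry (change-making) argument.

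\emph{Reduction to normal form.} Start from an arbitrary representation $X=\sum_i c_iw_i$ with $c_i\ge0$. The identity $Aw_i=Bw_{i+1}$ lets us exchange $B$ copies of $w_{i+1}$ for $A$ copies of $w_i$ while keeping all coefficients nonnegative. Process the indices from $i=m$ down to $i=2$. Whenever $c_i\ge B$, write $c_i=\lfloor c_i/B\rfloor B+(c_i\bmod B)$ and move $A\lfloor c_i/B\rfloor$ units into $c_{i-1}$. Each move only increases lower-index coefficients, so the downward sweep terminates after one pass. The result satisfies $0\le c_i<B$ for $2\le i\le m$, with $c_1\ge0$ arbitrary.

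\emph{Uniqueness.} Suppose two normal forms give the same $X$. Their difference $x$ satisfies $w\cdot x=0$ and $|x_i|<B$ for $i\ge2$. Reduce modulo $B$. Every $w_i$ with $i<m$ contains the factor $B$, while $w_m=A^{m-1}$ is invertible modulo $B$ because $\gcd(A,B)=1$. Hence $B\mid x_m$, and since $|x_m|<B$, we get $x_m=0$. Drop this coordinate and divide the remaining weights by $B$. They become $A^{i-1}B^{m-1-i}$, which is the same compound structure of length $m-1$. Induct down to $x_2=0$; then $x_1w_1=0$ forces $x_1=0$.

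\emph{Apéry set and Frobenius number.} Uniqueness identifies the Apéry set of $\mathcal S_m$ with respect to $w_1$. Its elements are the sums $\sum_{i\ge2}c_iw_i$ with $0\le c_i<B$. There are $B^{m-1}=w_1$ of them, and they are pairwise incongruent modulo $w_1$: if two were congruent, adding suitable multiples of $w_1$ would produce two distinct normal forms of the same integer. So they form a complete residue system and are the minimal elements of $\mathcal S_m$ in their classes. Selmer's formula, $F=\max(\mathrm{Ap})-w_1$, then gives $F_m=(B-1)\sum_{i\ge2}w_i-w_1$, and every $X>F_m$ lies in $\mathcal S_m$. For the size bound, the $w_i$ grow geometrically with ratio $R>1$, so $\sum_{i\ge2}w_i\le A^{m-1}\sum_{j\ge0}R^{-j}=A^{m-1}R/(R-1)$. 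With $B$ fixed, this gives $F_m=O_{A,B}(A^{m-1})$.

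The main obstacle is the uniqueness step. The induction must preserve the compound form after dividing by $B$, and the coprimality of $A$ and $B$ must be used at each stage. Without uniqueness, the claim that the normal forms are minimal in each residue class, and hence Selmer's formula, does not follow.
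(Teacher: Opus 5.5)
Your proposal is correct and follows the paper's proof. The paper also uses top-down carries $Bw_{i+1}=Aw_i$ for existence, successive reduction modulo $B$ for uniqueness, and the Ap\'ery set with respect to $w_1=B^{m-1}$ with $F_m=\max\mathrm{Ap}-w_1$. Your geometric-series estimate for $\sum_{i\ge2}w_i$ fills in a step the paper covers only by noting that the largest generator is $A^{m-1}$; that remark alone would give just $O(mA^{m-1})$.
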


\begin{proof}
  Starting at the largest index, divide $c_{i+1}$ by $B$.  Keep its remainder
  at coordinate $i+1$ and transfer $A$ times the quotient to coordinate $i$,
  using $Bw_{i+1}=Aw_i$.  This produces the claimed digit bounds.  Reading the
  equation successively modulo $B$ proves uniqueness.  Relative to the least
  generator $w_1=B^{m-1}$, the Ap\'ery set is exactly
  \[
    \left\{\sum_{i=2}^m c_iw_i:0\le c_i<B\right\}.
  \]
  Its largest member minus $w_1$ gives~\eqref{eq:compound-frobenius}.  Since
  the largest $w_i$ is $A^{m-1}$ and $A,B$ are fixed, the stated asymptotic
  bound follows.
\end{proof}

Now define adjacent carry vectors

\begin{equation}
  r_i=A\mathbf e_i-B\mathbf e_{i+1},\qquad1\le i<m.
  \label{eq:carry-vector}
\end{equation}

They preserve the main coordinate because $q\cdot r_i=0$.

\begin{proposition}[Complete integer carry kernel]
  \label{prop:kernel}
  The adjacent carries form a basis of the full integer kernel:
  \begin{equation}
    \ker_{\Z}(q)
    =\bigoplus_{i=1}^{m-1}\Z r_i.
    \label{eq:full-kernel}
  \end{equation}
\end{proposition}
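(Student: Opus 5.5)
We prove \eqref{eq:full-kernel} by induction on $m$, writing $q\cdot x=\sum_i q_ix_i$ with $q_i=A^iB^{m-i}=A\,w_i$. Since $q=Aw$, we have $\ker_{\Z}(q)=\ker_{\Z}(w)$, and it suffices to work with $w=(B^{m-1},AB^{m-2},\ldots,A^{m-1})$. The argument has two halves: containment of $\bigoplus\Z r_i$ in the kernel, which is immediate from $Aw_i=Bw_{i+1}$, and the reverse inclusion together with independence.

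For the reverse inclusion, take $x\in\Z^m$ with $w\cdot x=0$ and reduce modulo $A$. Every $w_i$ with $i\ge2$ is divisible by $A$, so we get $B^{m-1}x_1\equiv0\pmod A$. Because $\gcd(A,B)=1$, $B^{m-1}$ is a unit modulo $A$, and hence $A\mid x_1$. Write $x_1=A\lambda_1$ and set $x'=x-\lambda_1r_1$. Then $x'_1=0$, $x'_2=x_2+B\lambda_1$, and $x'$ is still in the kernel. The remaining relation reads $\sum_{i\ge2}A^{i-1}B^{m-i}x'_i=0$. Dividing by $A$ gives $\sum_{j=1}^{m-1}A^{j-1}B^{(m-1)-j}x'_{j+1}=0$, which is exactly the kernel condition for the weight vector of window length $m-1$ acting on $(x'_2,\ldots,x'_m)$. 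The carries for that smaller window are the shifted $r_2,\ldots,r_{m-1}$, with the same coefficients $A,-B$ on adjacent coordinates. By the induction hypothesis, $x'$ is an integer combination of these. The base case $m=1$ is trivial: $w=(1)$ has zero kernel, and the carry list is empty. For $m=2$ one can also check directly that $Bx_1+Ax_2=0$ forces $x=\lambda(A,-B)$.

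Independence follows from a triangular structure. If $\sum\lambda_ir_i=0$, then the first coordinate gives $A\lambda_1=0$, so $\lambda_1=0$. The second coordinate then gives $A\lambda_2-B\lambda_1=A\lambda_2=0$, and so on down the list. Equivalently, the $(m-1)\times m$ matrix with rows $r_i$ has the nonzero diagonal $A,\ldots,A$ in its first $m-1$ columns, so the sum is direct. This also matches ranks, since $\ker_{\Z}(q)$ has rank $m-1$.

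The only delicate point is making sure that the induction uses the unit-ness of $w_1$ modulo $A$ at every stage. After dividing by $A$, the new leading weight is $B^{m-2}$, again coprime to $A$, so the step repeats verbatim. I do not expect a genuine obstacle beyond writing this bookkeeping cleanly.
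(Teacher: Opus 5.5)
Your proof is correct and follows essentially the same route as the paper. Both reduce $w\cdot x=0$ modulo $A$ to get $A\mid x_1$ from $\gcd(A,B)=1$, subtract the corresponding multiple of $r_1$, drop the first coordinate and divide by $A$ to recover the length-$(m-1)$ problem, then induct, with independence read off from the first-nonzero-coordinate (triangular) structure of the $r_i$; your explicit base case and coordinate shift are just slightly more detailed bookkeeping of the same argument.
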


\begin{proof}
  Since $q=Aw$, the two vectors have the same integer kernel.  Let
  $x\in\ker_{\Z}(w)$.  Reduce $w\cdot x=0$ modulo $A$.  Only
  $w_1=B^{m-1}$ is nonzero modulo $A$, and it is invertible because
  $\gcd(A,B)=1$; hence $A\mid x_1$.  Write $x_1=Az_1$ and subtract
  $z_1r_1$.  The first coordinate becomes zero.  Removing it and dividing the
  remaining weights by $A$ leaves the same problem of length $m-1$.  Induction
  yields $x=\sum_i z_ir_i$.  Linear independence follows successively from the
  first nonzero coordinate of $\sum_i z_ir_i$.
\end{proof}

The local flooring error is

\begin{equation}
  e_i=Ba_{i+1}-Aa_i,\qquad -B<e_i<A.
  \label{eq:local-error}
\end{equation}

Using~\eqref{eq:qv-definition},

\begin{equation}
  Av_i-Bv_{i+1}=-B^m e_i.
  \label{eq:carry-error}
\end{equation}

Thus an adjacent carry changes the actual gap sum by exactly $-e_i$.

\subsection{From a real interior point to integer coefficients}

Fix a main mass $Q$.  The nonnegative real fiber

\[
  \mathcal P_Q=\{c\in\R_{\ge0}^m:q\cdot c=Q\}
\]

has a one-dimensional error image:

\begin{equation}
  \left\{\frac{v\cdot c}{Q}:c\in\mathcal P_Q\right\}
  =[\sigma_-,\sigma_+],
  \quad
  \sigma_-:=\min_i\sigma_i,
  \quad
  \sigma_+:=\max_i\sigma_i.
  \label{eq:slope-convex-hull}
\end{equation}

Indeed, $(v\cdot c)/Q$ is the convex combination of the $\sigma_i$ with
weights $q_ic_i/Q$.

\begin{lemma}[Interior representation]
  \label{lem:interior}
  Suppose $\tau$ lies at distance at least $0<\eta\le1$ from both endpoints of
  the interval in~\eqref{eq:slope-convex-hull}.  For any margin $S>0$, if
  \begin{equation}
    Q\ge C\frac{S}{\eta}\sum_iq_i,
    \label{eq:interior-mass}
  \end{equation}
  where $C$ depends only on the uniform slope bound, then there is a real
  vector $c^*$ with
  \[
    q\cdot c^*=Q,\qquad v\cdot c^*=\tau Q,\qquad c_i^*\ge S.
  \]
\end{lemma}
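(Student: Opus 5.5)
The construction will be explicit: reserve $S$ in every coordinate, then place the remaining mass on the two extreme-slope coordinates. Let $i_-$ and $i_+$ be indices with $\sigma_{i_-}=\sigma_-$ and $\sigma_{i_+}=\sigma_+$. Set $c_i^{(0)}=S$ for all $i$. The reserved part has mass $Q_0:=S\sum_iq_i$ and error $v\cdot c^{(0)}=\tau_0Q_0$, where $\tau_0\in[\sigma_-,\sigma_+]$ is a convex combination of the slopes. The remaining mass is $Q':=Q-Q_0$, and the error we still need is $\tau Q-\tau_0Q_0$. Hence the residual slope is
\[
  \tau' := \frac{\tau Q-\tau_0Q_0}{Q'} = \tau+\frac{Q_0}{Q'}(\tau-\tau_0).
\]
Since $|\sigma_i|<2$ uniformly (the companion case included), we have $|\tau-\tau_0|\le 4$. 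So $|\tau'-\tau|\le 4Q_0/Q'$.

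Now take $C$ large enough, say $C=10$. Because $\eta\le1$, hypothesis~\eqref{eq:interior-mass} gives $Q\ge 10Q_0$, so $Q'\ge \tfrac{9}{10}Q$ and $Q_0/Q'\le \tfrac{10}{9}\cdot\tfrac{\eta}{10}$. Therefore $|\tau'-\tau|\le \tfrac{4}{9}\eta<\eta$. Since $\tau$ lies at distance at least $\eta$ from both endpoints, $\tau'$ still lies strictly inside $[\sigma_-,\sigma_+]$. In particular $\sigma_-<\sigma_+$ in this situation, because the hypothesis forces the interval to have length at least $2\eta>0$.

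Next, write $\tau'=\lambda\sigma_-+(1-\lambda)\sigma_+$ with $\lambda\in[0,1]$. Add $\lambda Q'/q_{i_-}$ to coordinate $i_-$ and $(1-\lambda)Q'/q_{i_+}$ to coordinate $i_+$. These increments are nonnegative. By construction they contribute mass $Q'$ and error $\tau'Q'$. If $i_-=i_+$ the interval is a single point, which was excluded above. Summing, the resulting vector $c^*$ satisfies $q\cdot c^*=Q_0+Q'=Q$ and $v\cdot c^*=\tau_0Q_0+\tau'Q'=\tau Q$. Every coordinate is at least $S$, since we only added nonnegative amounts to the reserved vector.

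The only real care is in the second paragraph: checking that the perturbation of the target slope caused by the reservation stays below $\eta$, uniformly in $t$, $m$ and $d$. This rests on two facts already recorded. The slope bound $|\sigma_i|<2$ from~\eqref{eq:slopes} and its companion translate gives $C$ independent of $m$. The identity~\eqref{eq:slope-convex-hull} guarantees that every point of $[\sigma_-,\sigma_+]$ is reachable using only the two endpoint coordinates. No integrality is needed here; integer rounding is deferred to the subsequent rounding lemma.
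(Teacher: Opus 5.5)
Your proposal is correct and follows the paper's proof essentially step for step. Both reserve $S$ in every coordinate, observe that the residual slope $\tau'=(\tau Q-\tau_0Q_0)/(Q-Q_0)$ differs from $\tau$ by at most a constant times $Q_0/Q\le\eta/C$ and so stays in $[\sigma_-,\sigma_+]$, and then split the remaining mass between the two extreme-slope coordinates. The only difference is that you make the constant explicit ($C=10$ with slope bound $2$, giving displacement at most $4\eta/9$), where the paper just says the displacement is at most $\eta/2$ after enlarging $C$.
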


\begin{proof}
  Reserve $S$ units in every coordinate and write $c^*=S\mathbf1+y$.  Let
  $W=S\sum_iq_i$ and $Z=S\sum_iv_i$.  The reserved mass has slope
  $Z/W\in[\sigma_-,\sigma_+]$.  The remaining mass $Q-W$ must have slope
  \[
    \tau'=\frac{\tau Q-Z}{Q-W}.
  \]
  Its displacement from $\tau$ is at most a constant times $W/(Q-W)$, which
  is at most $\eta/2$ under~\eqref{eq:interior-mass} after increasing $C$.
  Hence $\tau'$ remains in the slope interval.  Distribute the remaining mass
  between coordinates attaining $\sigma_-$ and $\sigma_+$.
\end{proof}

The lemma is unchanged if one coordinate is duplicated with the same main
weight and a new uniformly bounded slope.  In that augmented fiber, the sum in
\eqref{eq:interior-mass} runs over all coordinates; one duplicate makes it at
most $2\sum_iq_i$.  This is the version used for a unit companion below.

\begin{lemma}[Local carry rounding]
  \label{lem:rounding}
  Let $c^{(0)}$ be an integer point and $c^*$ a real point with the same main
  coordinate.  Write, uniquely over $\R$,
  \[
    c^*-c^{(0)}=\sum_{i=1}^{m-1}z_i^*r_i.
  \]
  Replacing every $z_i^*$ by its nearest integer gives an integer point
  $\widehat c$ with the same main coordinate and
  \begin{equation}
    \|\widehat c-c^*\|_\infty\le(A+B)/2.
    \label{eq:local-rounding}
  \end{equation}
  If $c^*$ represents $kd$ in actual gap value, then
  \begin{equation}
    \left|\sum_i\widehat c_i a_i-kd\right|
    \le\frac12\sum_{i=1}^{m-1}|e_i|
    =\bigO_{A,B}(m).
    \label{eq:rounding-residual}
  \end{equation}
\end{lemma}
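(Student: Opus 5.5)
The plan is to treat the difference $c^*-c^{(0)}$ as a real element of the kernel of $q$ and to exploit the unimodular structure given by \cref{prop:kernel}. Since $q\cdot(c^*-c^{(0)})=0$ and the vectors $r_1,\ldots,r_{m-1}$ are linearly independent, they span the real kernel of $q$, which has dimension $m-1$. The coefficients $z_i^*$ are therefore unique real numbers. Set $\widehat z_i=\lfloor z_i^*\rceil$, the nearest integer, and $\widehat c=c^{(0)}+\sum_i\widehat z_ir_i$. Then $\widehat c$ is an integer vector, because $c^{(0)}$ and every $r_i$ are integral, and it has the same main coordinate, because $q\cdot r_i=0$.

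For the sup-norm bound \eqref{eq:local-rounding}, write $\delta_i=\widehat z_i-z_i^*$, so $|\delta_i|\le1/2$. Then $\widehat c-c^*=\sum_i\delta_ir_i$. Since $r_i=A\mathbf e_i-B\mathbf e_{i+1}$, coordinate $j$ of this sum equals $A\delta_j-B\delta_{j-1}$, where $\delta_0=\delta_m=0$. Each coordinate is thus at most $A/2+B/2$ in absolute value, which gives the stated bound.

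For the residual \eqref{eq:rounding-residual}, compute the change in actual gap value. By definition of $e_i$,
\[
  a\cdot r_i=Aa_i-Ba_{i+1}=-e_i,
\]
which is also what \eqref{eq:carry-error} expresses. Hence
\[
  \sum_i\widehat c_ia_i-kd=a\cdot(\widehat c-c^*)=-\sum_{i=1}^{m-1}\delta_ie_i.
\]
Its absolute value is at most $\tfrac12\sum_i|e_i|$. By \eqref{eq:local-error}, each $|e_i|<\max(A,B)=A$, so the sum is $O_{A,B}(m)$.

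I expect no serious obstacle. The only point needing care is justifying the real decomposition, that is, that the $r_i$ span $\ker_{\R}(q)$ and not just the integer lattice. This follows from independence together with the dimension count. A second point is that the integrality of $\widehat c$ uses only that $c^{(0)}$ is an integer point; it does not use the full integer-kernel statement.

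Finally, the residual $\sum_i\widehat c_ia_i-kd$ is an integer, since both terms are, which is what allows the later unit-exchange correction.
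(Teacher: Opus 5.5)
Your proof is correct and follows the paper's argument: the real decomposition comes from the kernel basis, the sup-norm bound comes from each coordinate seeing at most two adjacent carry-rounding errors, and the residual comes from $a\cdot r_i=-e_i$. The paper cites \eqref{eq:carry-error} for that identity, which is your $a\cdot r_i=-e_i$ multiplied by $B^m$. Your explicit coordinate formula $A\delta_j-B\delta_{j-1}$ simply writes out the paper's one-line sup-norm step.
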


\begin{proof}
  The integer basis in \cref{prop:kernel} also spans the real kernel, so the
  displayed real decomposition exists and is unique.
  Each coordinate is affected by at most two adjacent carry-rounding errors,
  each of magnitude at most one half, proving~\eqref{eq:local-rounding}.
  Equation~\eqref{eq:carry-error} says that rounding carry $i$ changes actual
  value by $-e_i$ times its rounding error.  Summing proves
  \eqref{eq:rounding-residual}.
\end{proof}

Without an additional arithmetic property, the residual in
\eqref{eq:rounding-residual} need not be removable by ordinary carries.  The
unit companions are designed precisely for this obstruction.

\subsection{The unit direction and a conductor bound}

Suppose one future position $j$ contains both $a_j$ and $a_j+1$.  Give the
companion a separate coefficient.  The two coordinates have the same main
weight $q_j$, while their error weights differ by $B^m$.  Their slopes differ
by

\begin{equation}
  \frac{B^m}{q_j}=R^{-j}.
  \label{eq:unit-slope-width}
\end{equation}

More importantly, the exchange vector

\begin{equation}
  s=-\mathbf e_{a_j}+\mathbf e_{a_j+1}
  \label{eq:unit-direction}
\end{equation}

preserves the main coordinate and changes the actual gap sum by exactly one.

\begin{proposition}[Augmented kernel]
  \label{prop:augmented-kernel}
  If $q^+$ duplicates the weight $q_j$ for the companion coordinate, then
  \[
    \ker_{\Z}(q^+)
    =\left(\bigoplus_{i=1}^{m-1}\Z r_i\right)\oplus\Z s.
  \]
\end{proposition}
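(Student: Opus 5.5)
The plan is to reduce to \cref{prop:kernel} by a projection that forgets the companion coordinate. Order the augmented coordinates as $(c_1,\ldots,c_m,c_j^+)$, where $c_j^+$ is the coefficient of $a_j+1$. Then $q^+\cdot x=\sum_i q_ix_i+q_jx_j^+$. First I will check that the proposed generators lie in the kernel. Each $r_i$ has zero companion coordinate and satisfies $q\cdot r_i=0$. The exchange vector $s$ has entries $-1$ at position $j$ and $+1$ at the companion position, so $q^+\cdot s=-q_j+q_j=0$. Hence the right-hand side is contained in $\ker_{\Z}(q^+)$.

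For the reverse inclusion, take $x\in\ker_{\Z}(q^+)$ and set $x':=x-x_j^+\,s$. This vector is integral. Its companion coordinate is $x_j^+-x_j^+=0$, and its $j$-th coordinate is $x_j+x_j^+$. It still lies in $\ker_{\Z}(q^+)$. Because its companion coordinate vanishes, its first $m$ coordinates form a vector in $\ker_{\Z}(q)$. \Cref{prop:kernel} then writes this vector as $\sum_i z_ir_i$ with $z_i\in\Z$, which gives $x=\sum_i z_ir_i+x_j^+s$.

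For directness of the sum, suppose $\sum_i z_ir_i+z\,s=0$. Reading the companion coordinate gives $z=0$, since only $s$ is nonzero there. The remaining relation $\sum_i z_ir_i=0$ then forces every $z_i=0$ by the linear independence already established in \cref{prop:kernel}. Thus the $m$ vectors $r_1,\ldots,r_{m-1},s$ form a $\Z$-basis of the kernel.

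I do not expect a real obstacle; the only care needed is bookkeeping. The companion coordinate must be treated as a genuinely separate variable, and when $j=m$ the carry $r_{m-1}$ touches position $m$ but never the companion slot, so the projection argument is unaffected. As a rank check, the kernel of a nonzero functional on $\Z^{m+1}$ has rank $m$, and the basis has exactly $(m-1)+1=m$ elements. The same argument, read over $\R$, gives the real decomposition that \cref{lem:rounding} needs in the augmented fiber.
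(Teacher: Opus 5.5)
Your proof is correct and follows the paper's argument: subtracting $x_j^+s$ is the paper's step of merging the companion coefficient into the $j$th coordinate before invoking \cref{prop:kernel}, and your directness argument, which reads off the companion coordinate, matches the paper's. Your explicit check that $r_i$ and $s$ lie in the kernel, and your rank remark, are harmless additions that the paper leaves implicit.
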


\begin{proof}
  Merge the companion coefficient into the original $j$th coefficient.  Any
  augmented kernel vector then becomes a vector in $\ker_{\Z}(q)$, so
  \cref{prop:kernel} expresses it in the $r_i$.  Restoring the companion
  coordinate leaves an integer multiple of $s$.  Directness follows because
  every $r_i$ has zero companion coordinate whereas $s$ has companion
  coordinate one.
\end{proof}

\begin{lemma}[Augmented rounding and exact correction]
  \label{lem:augmented-rounding}
  Let $c^{(0)}\in\Z_{\ge0}^{m+1}$ and
  $c^*\in\R_{\ge0}^{m+1}$ have the same augmented main coordinate.  Suppose
  $c^*$ represents $kd$ in actual gap value and every coordinate of $c^*$ is
  at least $S$.  There is a constant $C_{A,B}$ such that, whenever
  $S\ge C_{A,B}m$, the same main coordinate has a nonnegative integer
  representation of exactly $kd$.
\end{lemma}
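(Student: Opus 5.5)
The plan is to move from the real interior point $c^*$ back to the integer lattice in two stages: first using only the adjacent carries, then using the unit direction $s$ to remove the remaining integer error in actual value. The starting point is \cref{prop:augmented-kernel}. The difference $c^*-c^{(0)}$ has zero augmented main coordinate, so it lies in the real span of $r_1,\ldots,r_{m-1},s$. This decomposition is unique because these vectors form a $\Z$-basis of the integer kernel, and hence an $\R$-basis of the real kernel. Write
\[
  c^*-c^{(0)}=\sum_{i=1}^{m-1}z_i^*r_i+y^*s .
\]
Define the integer point $\widehat c=c^{(0)}+\sum_i \lfloor z_i^*\rceil r_i+\lfloor y^*\rceil s$, where $\lfloor\cdot\rceil$ denotes rounding to the nearest integer. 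It has the same augmented main coordinate as $c^{(0)}$, and hence as $c^*$.

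Next I will bound how far $\widehat c$ is from $c^*$, coordinatewise. Each original coordinate $i$ is touched by at most the two carries $r_{i-1},r_i$, with coefficients of size at most $\max(A,B)$. Coordinate $j$ and the companion coordinate are also touched by $s$, with coefficient $\pm1$. Each rounding error is at most $1/2$, so $\|\widehat c-c^*\|_\infty\le (A+B)/2+1$.

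Then I will compute the actual-value error. By \eqref{eq:carry-error}, a unit step along $r_i$ changes $\sum c_ia_i$ by $-e_i$, and a unit step along $s$ changes it by exactly $+1$ (or $-1$, depending on the sign convention). Since $c^*$ represents $kd$ exactly, the residual is
\[
  \rho=\sum_i\widehat c_ia_i-kd=-\sum_i(\lfloor z_i^*\rceil-z_i^*)e_i\pm(\lfloor y^*\rceil-y^*),
\]
so $|\rho|\le\tfrac12\sum_i|e_i|+\tfrac12=O_{A,B}(m)$, using $|e_i|<A+B$ from \eqref{eq:local-error}. The point $\widehat c$ is an integer vector and $\sum\widehat c_ia_i$ is an integer, so $\rho$ is an integer. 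The final step sets $c^\dagger=\widehat c\mp\rho s$. This preserves the main coordinate and makes the actual value exactly $kd$.

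The step I expect to need the most care is nonnegativity, which is where the constant $C_{A,B}$ is fixed. The correction along $s$ changes only coordinates $j$ and the companion coordinate, each by $|\rho|\le C'_{A,B}m$. Every coordinate of $c^*$ is at least $S$. Combining the two displacement bounds,
\[
  c^\dagger_i\ge S-(A+B)/2-1-C'_{A,B}m ,
\]
which is nonnegative once $S\ge C_{A,B}m$ for a suitable $C_{A,B}$ (using $m\ge 2$ to absorb the constant terms). I must also check that $c^\dagger$ is an integer vector: $c^{(0)}$ is integral, all rounded coefficients are integers, and $\rho\in\Z$.

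The remaining subtlety concerns the actual-value map on the augmented coordinates. For the companion variable it is $a_j+1$, and for the others it is $a_i$. I need \eqref{eq:carry-error} to still describe what the carries do on the original coordinates; it does, because the carries $r_i$ have zero companion coordinate. And $s$ shifts the actual value by exactly one because the companion gap equals $a_j+1$. With these checks, $c^\dagger\in\Z_{\ge0}^{m+1}$ represents exactly $kd$ with the prescribed main coordinate, which is the claim.
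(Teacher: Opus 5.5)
Your proposal is correct and follows the paper's proof essentially step for step: decompose $c^*-c^{(0)}$ in the basis $r_1,\ldots,r_{m-1},s$ of \cref{prop:augmented-kernel}, round all coefficients including the unit one, bound the integer residual by $\tfrac12\sum_i|e_i|+\tfrac12=O_{A,B}(m)$, cancel it with integral steps along $s$, and use the margin $S\ge C_{A,B}m$ to keep every coordinate nonnegative. Your coordinatewise bound $(A+B)/2+1$ is slightly looser than the paper's $(A+B)/2+1/2$, which makes no difference.
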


\begin{proof}
  By \cref{prop:augmented-kernel}, over $\R$ there are unique coefficients
  $z_1^*,\ldots,z_{m-1}^*,z_u^*$ with
  \[
    c^*-c^{(0)}=\sum_{i=1}^{m-1}z_i^*r_i+z_u^*s.
  \]
  Round every coefficient to the nearest integer.  Ordinary carry rounding
  changes any coordinate from $c^*$ by at most $(A+B)/2$, and rounding the
  unit coefficient adds at most $1/2$ on the two unit-pair coordinates.  The
  resulting vector $\widetilde c$ is integral and has the same main
  coordinate.  Its actual-value residual
  \[
    r:=\sum_i\widetilde c_i a_i+\widetilde c_u(a_j+1)-kd
  \]
  is an integer and satisfies
  \[
    |r|\le\frac12\sum_{i=1}^{m-1}|e_i|+\frac12
         =O_{A,B}(m).
  \]
  Moving $-r$ integral steps along $s$ keeps the main coordinate and removes
  the residual exactly.  The total decrease of either unit-pair coefficient,
  as well as every ordinary rounding decrease, is at most $C_{A,B}m$ after
  fixing the constant.  The margin hypothesis therefore keeps all final
  coefficients nonnegative.
\end{proof}

\begin{theorem}[Unit-companion conductor]
  \label{thm:conductor}
  There are constants $t_*$ and $C_{A,B}$ with the following property.  Let
  $t\ge t_*$, let $d\in\{h_t,h_t+1\}$ be a gap present in the completed
  sequence, set $m=m(t)$, and put $a_i=h_{t+i}$ for $1\le i\le m$.  If the
  augmented future list contains both $a_j$ and $a_j+1$ for some
  $1\le j\le m$, then its multiplier semigroup $M_d^+$ has conductor
  \begin{equation}
    \cond M_d^+
    \le C_{A,B}\,mR^j\left(\frac d{B^m}+1\right).
    \label{eq:conductor-bound}
  \end{equation}
\end{theorem}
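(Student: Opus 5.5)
The plan is to follow the main-text argument of \cref{lem:main-conductor}, now using the appendix tools \cref{lem:compound,lem:interior,prop:augmented-kernel,lem:augmented-rounding}. Fix $k$ and look for a representation of $kd$ through the exact reformulation \eqref{eq:coupled-coordinates}, applied to the augmented list. The companion coordinate has main weight $q_j$ and error weight $v_j+B^m$, because $B^m(a_j+1)=dq_j+v_j+B^m$. By \eqref{eq:unit-slope-width} its slope is $\sigma_j+R^{-j}$. If $d=h_t+1$, every slope shifts by $-1$, and the width and the uniform bound $|\sigma|<3$ still hold.

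Writing the main mass as $Q=kB^m-D=AX$, the target slope $\tau=Dd/Q$ forces
\[
  X(\tau)=\frac{dkB^m}{A(d+\tau)},
\]
which is monotone in $\tau$ with $|dX/d\tau|\asymp kB^m/d$ on the bounded slope range, since $d\ge4$ and $|\tau|<3$. Let $J$ be the middle third of $[\sigma_j,\sigma_j+R^{-j}]$. It lies inside the augmented slope interval at distance $\eta\ge R^{-j}/3$ from both endpoints. The image $X(J)$ has length $\Omega_{A,B}(kR^{-j}B^m/d)$, so it contains an integer once $k\ge CR^j d/B^m$. Choose such an integer $X$ and set $\tau:=\tau(X)\in J$. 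Then $D=kB^m-AX$ is an integer, and we need $Dd=\tau Q$, which holds by construction. Every $X\in X(J)$ is $\asymp kB^m$. Enlarging $C$ so that $kB^m\ge C'A^{m-1}$ makes $X>F_m$, and \cref{lem:compound} then supplies a nonnegative integer $c^{(0)}$ with $w\cdot c^{(0)}=X$, i.e.\ $q\cdot c^{(0)}=Q$. We put companion coefficient $0$.

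Next apply \cref{lem:interior} in its duplicated-coordinate form with margin $S=C_{A,B}m$, as required by \cref{lem:augmented-rounding}. This needs $Q\ge C(S/\eta)\cdot 2\sum_i q_i$. Since $\sum_i q_i=O_{A,B}(A^m)$, $Q\asymp kB^m$ and $\eta\asymp R^{-j}$, it suffices that $k\ge C_{A,B}\,mR^jA^m/B^m=C_{A,B}\,mR^jR^m$. By \eqref{eq:window-scales}, $R^m\asymp d/B^m$, so this is $O(mR^j(d/B^m+1))$. The lemma gives a real $c^*\ge S$ with $q^+\cdot c^*=Q$ and $v^+\cdot c^*=\tau Q=Dd$. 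By \eqref{eq:coupled-coordinates}, read backwards, $c^*$ represents $kd$ in actual value. Then \cref{lem:augmented-rounding} converts $c^*$ into a nonnegative integer representation of $kd$, so $k\in M_d^+$.

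All thresholds have the form $C_{A,B}\,mR^j(d/B^m+1)$: the integer-hitting condition, the $X>F_m$ condition (using $A^{m-1}/B^m\lesssim d/B^m$) and the interior condition. Every $k$ above their maximum lies in $M_d^+$, which bounds the conductor. The main obstacle I expect is the uniformity bookkeeping. We must check that $dX/d\tau$ is comparable to $kB^m/d$ with constants independent of $t$, including the shifted case $d=h_t+1$. We must also check that \cref{lem:augmented-rounding} applies even though $c^{(0)}$ has companion coordinate $0$ while $c^*$ does not. Here $c^*-c^{(0)}$ lies in the real span of the augmented kernel, which \cref{prop:augmented-kernel} provides. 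I would verify both points first, then fix $t_*$ so that $h_{t+j}+1<h_{t+j+1}$ and the window estimates hold.
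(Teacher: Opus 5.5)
Your proposal is correct and follows the paper's proof essentially step for step. It uses the same choice $Q=kB^m-D=AX$ with the inverse slope map $X(\tau)$, the middle third of the unit-pair interval to get $\eta\asymp R^{-j}$, \cref{lem:compound} for the reference point (with a zero companion coefficient), the duplicated-coordinate form of \cref{lem:interior} with margin $S=C_{A,B}m$, and \cref{lem:augmented-rounding} for the exact correction; your looser uniform slope bound $|\sigma|<3$, where the paper uses $2$, is harmless.
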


\begin{proof}
  Fix a multiplier $k$.  We construct nonnegative integer coefficients
  representing $kd$ whenever $k$ exceeds the right-hand scale.

  Choose a main coordinate of the form
  \[
    Q=kB^m-D=AX,
  \]
  with integer $X$.  The required error slope is
  \begin{equation}
    \tau(X)=\frac{Dd}{Q}
           =\frac{dkB^m}{AX}-d,
    \qquad
    X(\tau)=\frac{dkB^m}{A(d+\tau)}.
    \label{eq:slope-inverse}
  \end{equation}
  Because $t\ge t_*$ and every augmented slope has absolute value below $2$,
  $d+\tau\in[d-2,d+2]=\Theta(d)$.  Therefore, on the relevant slope interval,
  \[
    \left|\frac{dX}{d\tau}\right|
    =\Theta_{A,B}(kB^m/d).
  \]
  By~\eqref{eq:unit-slope-width}, the two unit-pair slopes delimit an interval
  of width $R^{-j}$.  Remove the outer thirds.  Every point of the remaining
  interval is at distance at least $R^{-j}/3$ from both endpoints of the full
  augmented slope interval.  Its preimage under
  \eqref{eq:slope-inverse} has length
  \[
    \Omega_{A,B}\!\left((kB^m/d)R^{-j}\right).
  \]
  For $k\ge C R^j(d/B^m+1)$ this preimage has length greater than one and
  therefore contains an integer $X$.  Uniformly on it,
  $X=\Theta_{A,B}(kB^m)$.  Since $A^{m-1}<h_t\le A^m$ and
  $F_m=O_{A,B}(A^{m-1})$, increasing the fixed constant $C_{A,B}$ also makes
  $X>F_m$.  Thus \cref{lem:compound} gives a nonnegative integer reference
  representation of $X$ in the $w_i$, equivalently of $Q$ in the $q_i$;
  extend it by a zero companion coefficient.

  The selected slope remains at distance
  $\eta=\Theta(R^{-j})$ from the full augmented interval endpoints.  The
  augmented main-weight sum is at most $2\sum_iq_i=O_{A,B}(A^m)$, so the
  augmented version of \cref{lem:interior} supplies an exact real
  representation with margin $S$ once
  \[
    k\ge C_{A,B}S R^jR^m.
  \]
  Using $R^m=\Theta(d/B^m+1)$ and taking $S=C_{A,B}m$ yields the scale in
  \eqref{eq:conductor-bound}.

  The reference point and this real point have the same main coordinate, and
  the real point satisfies both equations in
  \eqref{eq:coupled-coordinates}.  With $S=C_{A,B}m$,
  \cref{lem:augmented-rounding} converts it to a nonnegative integer point of
  exactly the same actual value $kd$.

  Thus all $k$ above~\eqref{eq:conductor-bound} are representable.
\end{proof}

The augmented list used in the theorem is a subset of the gaps already
executed in the actual Shellsort pass.  Adding any other executed gaps can only
enlarge the representable multiplier semigroup.  Therefore the same conductor
bound is a valid upper bound for the actual missing-multiplier count $U_d$.

The unit pair removes every residue-class obstruction constructively; no gcd
assumption on the future gaps is needed.

\subsection{Zero-density placement covers every window}

We first verify the guard without floating-point arithmetic.  Writing
$\alpha=p/q$ in lowest terms, the two comparisons are checked from
\[
  pA^{2723}<(10^{1000}+1)qB^{2723},\qquad
  pA^{2724}\ge(10^{1000}+1)qB^{2724}.
\]
These exact integer inequalities are respectively equivalent to

\begin{equation}
  h_{2723}\le10^{1000}<h_{2724}.
  \label{eq:guard-check}
\end{equation}

In particular every added companion is greater than $10^{1000}$.  To prove
window coverage, choose $r$ with $s_r\le t<s_{r+1}$ and use $s_{r+1}$ as the
next companion index (also when $t=s_r$).  Its offset satisfies
\[
  j=s_{r+1}-t
  \le s_{r+1}-s_r
  =\lceil\log(s_r+2)\rceil
  \le\lceil\log(t+2)\rceil.
\]
Since $m(t)=\kappa t+O(1)$ and $\kappa>0$, this logarithmic offset is smaller
than $m(t)$ for every sufficiently large $t$.  Thus

\begin{equation}
  1\le j\le\lceil\log(t+2)\rceil<m(t).
  \label{eq:nearby-companion}
\end{equation}

Hence every sufficiently large complete future window contains a unit pair.
The strict inequality $j<m(t)$ is also needed at the execution boundary: a
complete backbone window contains $h_{t+j+1}<N$, and our choice of $t_*$ gives
$h_{t+j}+1<h_{t+j+1}$.  Thus both members of the unit pair are below $N$ and
have actually been executed before the current pass.
Moreover $t=\Theta(\log d)$, so

\begin{equation}
  R^j\le R(t+2)^{\log R}=\polylog d.
  \label{eq:unit-polylog}
\end{equation}

Combining \cref{thm:conductor}, \eqref{eq:window-scales}, and
$m=\bigO(\log d)$ yields

\begin{equation}
  U_d=\bigO(d^\kappa\polylog d)
  \label{eq:hole-bound}
\end{equation}

for every sufficiently large complete-window original or companion pass.

It remains to count the companions.  Let $T=\Theta(\log N)$ be the largest
backbone index below $N$.  The recurrence~\eqref{eq:companion-indices} advances
by $\Theta(\log s)$ at index $s$.  In an index block
$[2^\ell,2^{\ell+1})$, every step has length at least $c\ell$, so the block
contains $O(2^\ell/\ell)$ companion indices.  Summing these bounds through the
last block is dominated, up to a constant, by $O(T/\log T)$.  Therefore

\begin{equation}
  \#\{r:h_{s_r}<N\}
  =\bigO\!\left(\frac{\log N}{\log\log N}\right)
  =o(\log N).
  \label{eq:companion-count}
\end{equation}

\subsection{Summing all passes}

Substituting~\eqref{eq:hole-bound} into the combined pass bound gives, for an
internal pass,

\begin{equation}
  \cost_d(N)=\bigO\!\left(
    \min\left\{\frac{N^2}{d},\ Nd^\kappa\polylog d\right\}
  \right).
  \label{eq:internal-pass}
\end{equation}

Ignoring polylogarithmic factors, the two terms balance when

\[
  d^{1+\kappa}\asymp N.
\]

At that scale their common exponent is

\begin{equation}
  \beta
  =2-\frac1{1+\kappa}
  =1+\frac{\kappa}{1+\kappa}
  =1.02429645165747606869\ldots.
  \label{eq:final-exponent}
\end{equation}

Set $D=N^{1/(1+\kappa)}$.  Because $h_t=\Theta(R^t)$ and every companion is
only $h_t+1$, there are at most two gaps at each geometric scale.  Hence

\begin{equation}
  \sum_{d\le D}d^\kappa=O(D^\kappa),
  \qquad
  \sum_{d>D}\frac1d=O(D^{-1}).
  \label{eq:appendix-geometric-sums}
\end{equation}

For complete-window passes with $d\le D$, use the representation term in
\eqref{eq:internal-pass} and the first sum above.  Their total is
$O(ND^\kappa\polylog N)$.  For complete-window passes with $d>D$, use the
insertion term and the second sum; their total is $O(N^2/D)$.

The largest gaps have fewer than $m(t)$ future backbone gaps below $N$.  Let
$h_T<N$ be the largest backbone gap used.  An incomplete window satisfies
$t+m(t)>T$.  From \eqref{eq:window-index},

\[
  t>\frac{T}{1+\kappa}-O(1),
  \qquad
  h_t=\Omega\!\left(N^{1/(1+\kappa)}\right).
\]

The same lower bound holds for a companion $h_t+1$.  Thus every incomplete
window is on the $d=\Omega(D)$ side.  Using $N^2/d$ and the second sum in
\eqref{eq:appendix-geometric-sums} bounds all top passes together by

\[
  \bigO\!\left(
    N^{2-1/(1+\kappa)}
  \right)
  =\bigO(N^\beta).
\]

Finally, every fixed early gap has $O(N)$ cost once the first later unit pair is
available: the two consecutive fixed integers generate a numerical semigroup
with fixed conductor.  Before that pair enters the gap list, $N$ lies in a
fixed finite interval and is absorbed into the asymptotic constant.  Finitely
many early passes therefore contribute only $O(N)$.

Since
\[
  ND^\kappa=N^{1+\kappa/(1+\kappa)}=N^\beta,
  \qquad
  N^2/D=N^{2-1/(1+\kappa)}=N^\beta,
\]
all pass classes total $O(N^\beta\polylog N)$.  This proves part 3 of
\cref{thm:main} without assuming a full future window at the top boundary.

Part 1 follows from~\eqref{eq:guard-check}: Shellsort uses only gaps below $N$,
and every companion exceeds $10^{1000}$.  Part 2 follows from the geometric
backbone and~\eqref{eq:companion-count}.  For part 4, every backbone term lies
within one of $\alpha(A/B)^t$, every companion lies within one of the same
quantity, and a single larger fixed constant covers the finite tuned prefix.
Thus Zang's theorem applies with $a=A$, $b=B$, and $q=\alpha$.  If
$\gamma=\log_B A$, its exponent is
\[
  1+\frac{\gamma-1}{2\gamma-1}
  =1+\frac{\kappa}{1+\kappa}
  =1.02429645165747606869\ldots,
\]
where $\kappa=1-1/\gamma$.  A swap lower bound is also an operation-count
lower bound.  This completes all four parts of the theorem.

\subsection{Why finite-prefix tuning preserves the theorem}

The proof above used the geometric formula only from a sufficiently large
index onward.  Replacing the eight positive backbone terms below 1416 by
$P_8$ changes finitely many passes.  After unit companions appear, each fixed
replacement gap again has a fixed multiplier-semigroup conductor and hence
$\bigO(N)$ pass cost.  A finite sum of such terms cannot change the exponent in
\eqref{eq:main-bound}.  This establishes the stated theorem specifically for
$\mathcal H^\sharp_{\rm tune}$, rather than only for the pre-tuning backbone.

For completeness, a sparser alternative places the next companion after
$\Theta(m(s_r))$ indices.  It adds only $\bigO(\log\log N)$ gaps, but the factor
$R^j$ in~\eqref{eq:conductor-bound} then contributes a second $d^\kappa$ and
gives the weaker exponent

\[
  1+\frac{2\kappa}{1+2\kappa}
  =1.04744027301502511325\ldots.
\]

The completion in \cref{thm:main} chooses the denser, still zero-density
placement because it gives the strongest exponent while remaining invisible
through the guard threshold.

\bibliographystyle{alpha}
\bibliography{references}

@article{shell1959,
  author  = {Donald L. Shell},
  title   = {A High-Speed Sorting Procedure},
  journal = {Communications of the ACM},
  volume  = {2},
  number  = {7},
  pages   = {30--32},
  year    = {1959},
  doi     = {10.1145/368370.368387}
}

@article{papernov1965,
  author  = {A. A. Papernov and G. V. Stasevich},
  title   = {A Method of Information Sorting in Computer Memories},
  journal = {Problems of Information Transmission},
  volume  = {1},
  number  = {3},
  pages   = {63--75},
  year    = {1965}
}

@techreport{pratt1972,
  author      = {Vaughan R. Pratt},
  title       = {Shellsort and Sorting Networks},
  institution = {Stanford University},
  number      = {STAN-CS-72-260},
  year        = {1972}
}

@article{incerpi1985,
  author  = {Janet Incerpi and Robert Sedgewick},
  title   = {Improved Upper Bounds on {Shellsort}},
  journal = {Journal of Computer and System Sciences},
  volume  = {31},
  number  = {2},
  pages   = {210--224},
  year    = {1985},
  doi     = {10.1016/0022-0000(85)90042-X}
}

@article{sedgewick1986,
  author  = {Robert Sedgewick},
  title   = {A New Upper Bound for {Shellsort}},
  journal = {Journal of Algorithms},
  volume  = {7},
  number  = {2},
  pages   = {159--173},
  year    = {1986},
  doi     = {10.1016/0196-6774(86)90001-5}
}

@article{yao1980,
  author  = {Andrew Chi-Chih Yao},
  title   = {An Analysis of $(h,k,1)$-{Shellsort}},
  journal = {Journal of Algorithms},
  volume  = {1},
  number  = {1},
  pages   = {14--50},
  year    = {1980},
  doi     = {10.1016/0196-6774(80)90003-6}
}

@article{jansonknuth1997,
  author  = {Svante Janson and Donald E. Knuth},
  title   = {{Shellsort} with Three Increments},
  journal = {Random Structures \& Algorithms},
  volume  = {10},
  number  = {1--2},
  pages   = {125--142},
  year    = {1997},
  doi     = {10.1002/(SICI)1098-2418(199701/03)10:1/2<125::AID-RSA6>3.0.CO;2-X}
}

@inproceedings{plaxtonpoonen1992,
  author    = {C. Greg Plaxton and Bjorn Poonen and Torsten Suel},
  title     = {Improved Lower Bounds for {Shellsort}},
  booktitle = {Proceedings of the 33rd Annual Symposium on Foundations of Computer Science},
  pages     = {226--235},
  publisher = {IEEE},
  year      = {1992},
  doi       = {10.1109/SFCS.1992.267769}
}

@article{poonen1993,
  author  = {Bjorn Poonen},
  title   = {The Worst Case in {Shellsort} and Related Algorithms},
  journal = {Journal of Algorithms},
  volume  = {15},
  number  = {1},
  pages   = {101--124},
  year    = {1993},
  doi     = {10.1006/jagm.1993.1032}
}

@article{jiang2000,
  author  = {Tao Jiang and Ming Li and Paul Vit\'anyi},
  title   = {A Lower Bound on the Average-Case Complexity of {Shellsort}},
  journal = {Journal of the ACM},
  volume  = {47},
  number  = {5},
  pages   = {905--911},
  year    = {2000},
  doi     = {10.1145/355483.355488}
}

@inproceedings{tokuda1992,
  author    = {Naoyuki Tokuda},
  title     = {An Improved {Shellsort}},
  booktitle = {Proceedings of the 12th {IFIP} World Computer Congress},
  series    = {IFIP Transactions A},
  volume    = {A-12},
  pages     = {449--457},
  year      = {1992}
}

@inproceedings{ciura2001,
  author    = {Marcin Ciura},
  title     = {Best Increments for the Average Case of {Shellsort}},
  booktitle = {Fundamentals of Computation Theory},
  series    = {Lecture Notes in Computer Science},
  volume    = {2138},
  pages     = {106--117},
  publisher = {Springer},
  year      = {2001},
  doi       = {10.1007/3-540-44669-9_12}
}

@article{silver2016alphago,
  author  = {David Silver and others},
  title   = {Mastering the Game of {Go} with Deep Neural Networks and Tree Search},
  journal = {Nature},
  volume  = {529},
  pages   = {484--489},
  year    = {2016},
  doi     = {10.1038/nature16961}
}

@article{jumper2021alphafold,
  author  = {John Jumper and others},
  title   = {Highly Accurate Protein Structure Prediction with {AlphaFold}},
  journal = {Nature},
  volume  = {596},
  pages   = {583--589},
  year    = {2021},
  doi     = {10.1038/s41586-021-03819-2}
}

@article{fawzi2022alphatensor,
  author  = {Alhussein Fawzi and others},
  title   = {Discovering Faster Matrix Multiplication Algorithms with Reinforcement Learning},
  journal = {Nature},
  volume  = {610},
  pages   = {47--53},
  year    = {2022},
  doi     = {10.1038/s41586-022-05172-4}
}

@article{mankowitz2023alphadev,
  author  = {Daniel J. Mankowitz and others},
  title   = {Faster Sorting Algorithms Discovered Using Deep Reinforcement Learning},
  journal = {Nature},
  volume  = {618},
  pages   = {257--263},
  year    = {2023},
  doi     = {10.1038/s41586-023-06004-9}
}

@article{romeraparedes2024funsearch,
  author  = {Bernardino Romera-Paredes and others},
  title   = {Mathematical Discoveries from Program Search with Large Language Models},
  journal = {Nature},
  volume  = {625},
  pages   = {468--475},
  year    = {2024},
  doi     = {10.1038/s41586-023-06924-6}
}

@article{kiers2016compound,
  author  = {Claire Kiers and Christopher O'Neill and Vadim Ponomarenko},
  title   = {Numerical Semigroups on Compound Sequences},
  journal = {Communications in Algebra},
  volume  = {44},
  number  = {9},
  pages   = {3842--3852},
  year    = {2016},
  doi     = {10.1080/00927872.2015.1087013}
}

@misc{zang2026,
  author        = {Zhenghan Zang},
  title         = {Improved Lower Bounds of the Time Complexity of {Shellsort}},
  year          = {2026},
  eprint        = {2607.08997},
  archiveprefix = {arXiv},
  primaryclass  = {cs.DS},
  note          = {arXiv:2607.08997},
  doi           = {10.48550/arXiv.2607.08997}
}

\end{document}